\documentclass{llncs}
\usepackage[utf8]{inputenc}
\usepackage[english]{babel}
\usepackage{amsmath,amsfonts,amssymb,amsbsy} 

\usepackage{lmodern}
\usepackage[T1]{fontenc}
\usepackage{microtype}

\usepackage{tikz}
\usetikzlibrary{automata}

\newcommand{\tr}[1]{\vphantom{#1}^t #1} 
\newcommand{\vect}[3]{\begin{pmatrix}#1\\#2\\#3\end{pmatrix}}
\newcommand\Ab{\boldsymbol{\ell}} 

\title{Balancedness of Arnoux-Rauzy and Brun words}
\author{Vincent Delecroix\inst{1} \and Tom\'a\v{s} Hejda\inst{2,3} \and Wolfgang Steiner\inst{3}}

\institute{Institut de Math\'ematiques de Jussieu,  CNRS UMR 7586, \\ Universit\'e Paris Diderot -- Paris 7, Case 7012, 75205 Paris Cedex 13, France \\
\email{delecroix@math.jussieu.fr}
\and
Department of Mathematics and Doppler Institute, FNSPE, \\ Czech Technical University in Prague, Czech Republic \\
\email{tohecz@gmail.com}
\and
LIAFA, CNRS UMR 7089, Universit\'e Paris Diderot -- Paris 7, \\ Case 7014, 75205 Paris Cedex 13, France \\ 
\email{steiner@liafa.univ-paris-diderot.fr}}

\begin{document}
\pagestyle{plain}
\maketitle

\begin{abstract}
We study balancedness properties of words given by the Arnoux-Rauzy and Brun multi-dimensional continued fraction algorithms. 
We show that almost all Brun words on $3$~letters and Arnoux-Rauzy words over arbitrary alphabets are finitely balanced; in particular, boundedness of the strong partial quotients implies balancedness. 
On the other hand, we provide examples of unbalanced Brun words on $3$ letters.
\end{abstract}

\section{Introduction}
It is well known that Sturmian words are exactly the $1$-balanced aperiodic words on $2$~letters.
Standard Sturmian words can be characterized in the following way: Each standard Sturmian word $\omega \in \{1,2\}^\mathbb{N}$ is the image of a standard Sturmian word by the substitution $\alpha_1: 1 \mapsto 1,\, 2 \mapsto 12$, or $\alpha_2: 1 \mapsto 21,\, 2 \mapsto 2$; it has thus an `$S$-adic representation' $\omega = \alpha_1^{a_1} \alpha_2^{a_2} \alpha_1^{a_3} \alpha_2^{a_4} \cdots$ (with $S = \{\alpha_1,\alpha_2\}$). 
Moreover, $[0;a_1,a_2,\ldots\,]$ is the continued fraction expansion of~$f_2/f_1$, where $f_i$ denotes the frequency of the letter~$i$ in~$\omega$; e.g., the Fibonacci word is $\omega = \alpha_1 \alpha_2 \alpha_1 \alpha_2 \cdots$, with $[0;1,1,\ldots\,]$ being the golden mean.
For details, we refer to \cite[Chapter~2]{Lothaire02} and \cite[Chapter~6]{PytheasFogg02}.
Since each Sturmian word has the same language as a standard Sturmian word, it is sufficient to study the standard ones for all properties that depend only on the language, such as balancedness. 

Many different generalizations of Sturmian words to larger alphabets can be found in the literature; see e.g.~\cite{Balkova-Pelantova-Starosta10}.
We are interested in words that are provided by multi-dimensional continued fraction algorithms and the corresponding substitutions; see~\cite{Berthe11}.
Since $1$-balancedness is a strong restriction~\cite{Hubert00,Vuillon03}, we are interested in finite balancedness of words given by the Arnoux-Rauzy and Brun continued fraction algorithms; see
Sections~\ref{sec:notation} and~\ref{sec:ar_and_brun_words} for precise definitions.

The prototype of an Arnoux-Rauzy word is the Tribonacci word, which is $2$-balanced~\cite{Richomme-Saari-Zamboni10}.
However, we know from \cite{Cassaigne-Ferenczi-Zamboni00} that there are Arnoux-Rauzy words (on $3$~letters) that are not finitely balanced; see also~\cite{Cassaigne-Ferenczi-Messaoudi08}. 
In~\cite{Berthe-Cassaigne-Steiner}, it was shown that Arnoux-Rauzy words are finitely balanced if the `weak partial quotients' are bounded, and that a large class of Arnoux-Rauzy words are $2$-balanced. 
Here, we show that the set of finitely balanced Arnoux-Rauzy words has full measure (with respect to a suitably chosen measure on Arnoux-Rauzy words), and contains the words with bounded `strong partial quotients' (in arbitrary dimension). 
Note however that, for $d \geq 3$, Arnoux-Rauzy words are defined only for a set of slopes of zero Lebesgue measure that form the the so-called Rauzy gasket~\cite{Arnoux-Starosta13}.

The Brun algorithm has the advantage over Arnoux-Rauzy that it is defined for all directions in~$\mathbb{R}_+^d$. 
To our knowledge, the balancedness of words associated to the Brun algorithm has not been studied yet. 
We show that almost all Brun words on $3$~letters are finitely balanced; in particular, this holds for words with bounded `strong partial quotients'. 
We also exhibit Brun words (on $3$~letters) that are not finitely balanced.
Note that, for fixed points of substitutions, an exact criterion for balancedness is provided by~\cite{Adamczewski03}.

\section{Notation} \label{sec:notation}
Let $\mathcal{A} = \{1,2,\ldots,d\}$ be a finite alphabet and $\mathcal{A}^*$ be the free monoid over~$\mathcal{A}$ (with the concatenation as product).
Let $|w|$ be the length of a word $w \in \mathcal{A}^*$ and $|w|_j$ the number of occurrences of the letter $j \in \mathcal{A}$ in~$w$.
A~pair of words $u, v \in \mathcal{A}^*$ with $|u| = |v|$, is \emph{$C$-balanced} if 
\[
-C \le |u|_j - |v|_j \le C \quad \mbox{for all}\ j \in \mathcal{A}.
\]
A~\emph{factor} of an infinite word $\omega  = (\omega_n)_{n\in\mathbb{N}} \in \mathcal{A}^\mathbb{N}$ is a finite word of the form $\omega_{[k,\ell)} = \omega_k \omega_{k+1} \cdots \omega_{\ell-1}$.
An infinite word~$\omega$ is \emph{$C$-balanced} if each pair of factors $u, v$ of $\omega$ with $|u| = |v|$ is $C$-balanced; $\omega$~is \emph{finitely balanced} if it is $C$-balanced for some $C \in \mathbb{N}$.
The \emph{balance} of an infinite word $\omega$ is the smallest number $B(\omega)$ such that $\omega$ is $B(\omega)$-balanced, with $B(\omega) = \infty$ if $\omega$ is not finitely balanced.

The \emph{frequency}~$f_i$ of a letter $i \in \mathcal{A}$ in $\omega = (\omega_n)_{n\in\mathbb{N}} \in \mathcal{A}^\mathbb{N}$ is $\lim_{n\to\infty} |\omega_{[0,n)}|_i/n$, if the limit exists.
It is easy to see that the frequency of each letter exists when $\omega$ is finitely balanced (see \cite{Berthe-Tijdeman:02}).

A~\emph{substitution} $\sigma$ over~$\mathcal{A}$ is an endomorphism of~$\mathcal{A}^*$.
Its \emph{incidence matrix} is the square matrix $M_\sigma = (|\sigma(j)|_i)_{i,j\in\mathcal{A}} \in \mathbb{N}^{d\times d}$ (with $\mathbb{N} = \{0,1,2,\ldots\}$).
The map
\[
\Ab:\ \mathcal{A}^* \to\mathbb{N}^d, \ w \mapsto \tr{(|w|_{1},|w|_2,\ldots, |w|_{d})}
\]
is called the \emph{abelianization map}.
Note that $\Ab(\sigma(w)) = M_\sigma \Ab(w)$ for all $w\in \mathcal{A}^*$.

Let $(\sigma_n)_{n\in\mathbb{N}}$ be a sequence of substitutions over the alphabet~$\mathcal{A}$.
To keep notation concise, we set $M_n = M_{\sigma_n}$ for $n \in \mathbb{N}$ and denote products of consecutive substitutions and their incidence matrices by $\sigma_{[k,\ell)}=\sigma_k \sigma_{k+1} \cdots \sigma_{\ell-1}$ and $M_{[k,\ell)}=M_k M_{k+1}\cdots M_{\ell-1}$ respectively.
A~word $\omega \in \mathcal{A}^\mathbb{N}$ is a \emph{limit word} of $(\sigma_n)_{n\in\mathbb{N}}$ if there is a sequence $(\omega^{(n)})_{n\in\mathbb{N}}$ with 
\[
\omega^{(0)} = \omega, \quad \omega^{(n)} = \sigma_n\big(\omega^{(n+1)}\big) \quad \mbox{for all}\ n \in \mathbb{N},
\]
where the substitutions $\sigma_n$ are extended naturally to infinite words.
The word~$\omega$ is called an \emph{$S$-adic word} with \emph{directive sequence} $(\sigma_n)_{n\in\mathbb{N}}$ and $S = \{\sigma_n: n \in \mathbb{N}\}$. 
 
Given a directive sequence $(\sigma_n)_{n \in \mathbb{N}}$, we can define different generalizations of partial quotients. 
The sequence of \emph{weak partial quotients} is the sequence of positive integers $(a_n)_{n \in \mathbb{N}}$ such that $\sigma_0 \sigma_1 \cdots = \sigma_{A_0}^{a_0} \sigma_{A_1}^{a_1} \cdots$, with $A_n = \sum_{k=0}^{n-1} a_k$ and $\sigma_{A_n} = \cdots = \sigma_{A_{n+1}-1} \ne \sigma_{A_{n+1}}$ for all $n \in \mathbb{N}$. 
The notion of strong partial quotients refers to the time we need to reach a positive (or at least primitive) matrix in the product of incidence matrices. 
A~good precise definition of them probably depends on~$S$ and the intended use, but properties like being bounded should hold simultaneously for all suitable definitions.
In this paper, we say that the \emph{strong partial quotients are bounded by~$h$} if $M_{[n,n+h)}$ is primitive for all $n \in \mathbb{N}$.

\section{Arnoux-Rauzy and Brun words} \label{sec:ar_and_brun_words}
We are interested in this paper in two $S$-adic systems that arise naturally from multi-dimensional continued fraction algorithms. 
The set of \emph{Arnoux-Rauzy substitutions} over $d$~letters is $S_\mathrm{AR} = \{\alpha_i: i \in \mathcal{A}\}$ with
\[
	\alpha_i:\ i \mapsto i,\ j \mapsto ij\ \mbox{for}\ j \in \mathcal{A} \setminus \{i\}.
\]
For each directive sequence $(\sigma_n)_{n\in\mathbb{N}} = (\alpha_{i_n})_{n\in\mathbb{N}} \in S_\mathrm{AR}^\mathbb{N}$, the words $\sigma_{[0,n)}(i_n)$ are nested prefixes of the limit word~$\omega$. 
If the directive sequence contains infinitely many occurrences of each substitution~$\alpha_i$, $i \in \mathcal{A}$, the unique limit word~$\omega$ is called a \emph{standard Arnoux-Rauzy word}. 
Any word that has the same language (and thus the same balancedness properties) as a standard Arnoux-Rauzy word is called Arnoux-Rauzy word.
The \emph{Tribonacci word} is the Arnoux-Rauzy word on 3-letters with periodic
directive sequence $\alpha_1 \alpha_2 \alpha_3 \alpha_1 \alpha_2 \alpha_3 \cdots$; it satisfies
\vspace{-.25ex}
\[
\omega = 121312112131212131211213121312112131212131211213121121\cdots
\vspace{-.25ex}
\]
and is known to be 2-balanced~\cite{Richomme-Saari-Zamboni10}.

A~set of \emph{Brun substitutions} was defined in~\cite{Berthe-Fernique11} to provide a connection between stepped planes and the Brun algorithm. 
Here, we consider the set of substitutions $S_\mathrm{Br} = \{\beta_{ij}: i \in \mathcal{A},\, j \in \mathcal{A} \setminus \{i\}\}$ over $d$~letters, with
\[
\beta_{ij}:\ j \mapsto ij,\ k\mapsto k\ \text{for}\ k \in \mathcal{A} \setminus \{j\},
\]
that corresponds to the additive version of this algorithm.
An $S_\mathrm{Br}$-adic word is called a \emph{Brun word} if its directive sequence $(\sigma_n)_{n\in\mathbb{N}}$ satisfies
\begin{multline}\label{eq:Brun:admis}
\sigma_n\sigma_{n+1} \in \big\{\beta_{ij}\beta_{ij}: i \in \mathcal{A},\, j \in \mathcal{A} \setminus \{i\}\big\} \\ \cup \big\{\beta_{ij}\beta_{jk}: i \in \mathcal{A},\, j \in \mathcal{A} \setminus \{i\},\, k \in \mathcal{A} \setminus \{j\}\big\} \quad \mbox{for all}\ n \in \mathbb{N}
\end{multline}
and for each $i \in \mathcal{A}$ there is $j \in \mathcal{A}$ such that $\beta_{ij}$ occurs infinitely often in $(\sigma_n)_{n\in\mathbb{N}}$. 
E.g., the Brun word with periodic directive sequence $\beta_{12} \beta_{23} \beta_{31} \beta_{12} \beta_{23} \beta_{31} \cdots$ is
\vspace{-.25ex}
\[
\omega = 1231121231231123112123112123123112123123112311212312\cdots.
\vspace{-.25ex}
\]

Recall that the Brun algorithm~\cite{Brun58} subtracts at each step the second largest coordinate from the largest coordinate. 
It is given by the transformations
\[
T_{ij}:\ D_{ij} \to \mathbb{R}_+^d,\ \mathbf{f} \mapsto M_{\beta_{ij}}^{-1}\, \mathbf{f} / \|M_{\beta_{ij}}^{-1}\, \mathbf{f}\|_1, 
\] 
where $D_{ij} \subset \mathbb{R}_+^d$ is the set of vectors $\mathbf{f} =\tr{(f_1,\dots,f_d)}$ such that $f_i\geq f_j$ are the two largest components  of $\mathbf{f}$; here, $\mathbb{R}_+ = [0,\infty)$.
A sequence $(i_0,j_0)(i_1,j_1)(i_2,j_2)\dotsm$ is a Brun representation of $\mathbf{f} \in\mathbb R_+^d$ if
\[
	T_{i_{k-1}j_{k-1}} \cdots T_{i_1j_1} T_{i_0j_0}(\mathbf{f}) \in D_{i_kj_k}
\quad\text{for all}\quad
	k\in\mathbb{N}
.\]
Given a Brun word $\omega$ with directive sequence $\beta_{i_0j_0}\beta_{i_1j_1}\beta_{i_2j_2}\dotsm$,
 we get that $(i_0,j_0)(i_1,j_1)(i_2,j_2)\dotsm$ is a Brun representation of the vector of frequencies of $\omega$.

In the Arnoux-Rauzy algorithm, all but one coordinates are subtracted from the largest coordinate, which is assumed to be larger than the sum of the other coordinates.
Here, we have transformations $T_i: D_i \to \mathbb{R}_+^d$ with $D_i \subset \mathbb{R}_+^d$ being the set of vectors $\mathbf{f} =\tr{(f_1,\dots,f_d)}$ such that $f_i \geq \sum_{j\in\mathcal{A}\setminus\{i\}} f_j$.

The following two lemmas translate the fact that these two algorithms converge and show that the frequency vector $\mathbf{f} = \mathbf{f}(\omega)$ of the limit word of a directive sequence $(\sigma_n)_{n\in\mathbb{N}}$ is given by the limit cone
\begin{equation} \label{e:limitcone}
\mathbb{R}_+\, \mathbf{f} = \bigcap_{n\in\mathbb{N}} M_{[0,n)}\, \mathbb{R}_+^d.
\end{equation}
Moreover, because of the relation with the continued fraction algorithms, two distinct standard Arnoux-Rauzy words and two distinct standard Brun words respectively have different frequency vectors.

\begin{lemma} \label{l:frequencyBrun}
Each Brun word on $3$~letters has letter frequencies.
\end{lemma}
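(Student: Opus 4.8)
The plan is to show that the intersection of nested cones in~\eqref{e:limitcone} reduces to a single ray, which by the stated correspondence between Brun words and Brun representations is exactly the assertion that the frequency vector exists (and is then automatically the vector $\mathbf{f}$ with $\mathbb{R}_+\mathbf{f}=\bigcap_n M_{[0,n)}\mathbb{R}_+^3$). Concretely, given a Brun word $\omega$ on $3$~letters with directive sequence $\beta_{i_0j_0}\beta_{i_1j_1}\cdots$, the point $T_{i_{n-1}j_{n-1}}\cdots T_{i_0j_0}$ applied to any vector in $M_{[0,n)}\mathbb{R}_+^3$ lands back in the simplex, and the admissibility condition~\eqref{eq:Brun:admis} forces this orbit to behave like a genuine Brun expansion, so convergence of the Brun algorithm yields that the diameter of the $n$-th cone (intersected with the standard simplex) tends to~$0$.

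First I would record the explicit incidence matrices: $M_{\beta_{ij}}$ is the identity matrix with an extra $1$ in position $(i,j)$, so $M_{\beta_{ij}}^{-1}$ is the identity with a $-1$ in position $(i,j)$, and the cone $M_{\beta_{ij}}\mathbb{R}_+^3$ is the subset of $\mathbb{R}_+^3$ cut out by $f_i\ge f_j$. Then I would observe that a product $M_{[0,n)}$ is primitive once the directive sequence has "looped through" enough letters; the admissibility condition~\eqref{eq:Brun:admis} together with the hypothesis that for each $i$ some $\beta_{ij}$ occurs infinitely often guarantees that every sufficiently long block of the directive sequence produces a primitive (indeed positive) product. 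This is the key structural input: primitivity of $M_{[n,n+h)}$ for blocks of bounded length~$h$ forces the nested cones to contract, because a product of a primitive nonnegative matrix with the positive cone has strictly smaller projective diameter, and uniformly so on compact sets (Birkhoff's contraction / the standard projective-metric argument for products of primitive matrices).

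The cleanest route is therefore: (i) establish that under~\eqref{eq:Brun:admis} and the infinitude condition there is a bound $h$ and a subsequence $n_0<n_1<\cdots$ with $n_{k+1}-n_k\le h$ such that each $M_{[n_k,n_{k+1})}$ is a positive matrix — here one checks the finitely many admissible pairs in~\eqref{eq:Brun:admis} and does a short combinatorial argument on which letters must reappear; (ii) apply Birkhoff's theorem in Hilbert's projective metric: a positive matrix is a strict contraction of the projective simplex with a contraction ratio bounded away from~$1$ in terms of its entries, and the admissible positive products have entries drawn from a finite set, so there is a uniform ratio $\rho<1$; (iii) conclude that $\mathrm{diam}\big(M_{[0,n_k)}\mathbb{R}_+^3\cap\Delta\big)\le C\rho^{k}\to 0$, whence $\bigcap_n M_{[0,n)}\mathbb{R}_+^3$ is a single ray $\mathbb{R}_+\mathbf{f}$, and by the identity $\Ab(\sigma_{[0,n)}(w))=M_{[0,n)}\Ab(w)$ applied to longer and longer prefixes $w$ of $\omega^{(n)}$ one gets that $|\omega_{[0,N)}|_i/N\to f_i/(f_1+f_2+f_3)$.

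The main obstacle I expect is step~(i): proving that admissibility~\eqref{eq:Brun:admis} really does force a positive (not merely primitive) product to appear within a bounded window. The pair conditions only constrain consecutive substitutions, so one must argue that the "direction of subtraction" $(i_n,j_n)$ cannot stagnate forever — the clause "for each $i$ there is $j$ with $\beta_{ij}$ occurring infinitely often" is exactly what rules out the degenerate case, but turning "infinitely often" into a uniform bound $h$ requires care. One option is to avoid uniformity: show only that some subsequence of products is positive (using infinitude directly, without a bound), which still gives a strict contraction at each such step and hence $\mathrm{diam}\to 0$ along a subsequence, and then use nestedness of the cones to upgrade to the full limit. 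That weaker route suffices for Lemma~\ref{l:frequencyBrun} and sidesteps the combinatorial bookkeeping, so I would present the proof that way and reserve the uniform bound~$h$ for the later results where boundedness of strong partial quotients is hypothesized anyway.
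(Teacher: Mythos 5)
Your reduction to the collapse of the nested cones and the final step (cone collapse $\Rightarrow$ existence of frequencies) match the paper, but the way you propose to prove the collapse has a genuine gap exactly at the point you flag. The uniform version of step (i) is indeed false: admissibility \eqref{eq:Brun:admis} only constrains consecutive pairs, so arbitrarily long stretches $\beta_{ij}\beta_{ij}\cdots\beta_{ij}$ are allowed, and the ``infinitely often'' clause gives no bound $h$ on the gaps. Your fallback --- take an unbounded subsequence $n_0<n_1<\cdots$ with each block $M_{[n_k,n_{k+1})}$ positive, note each is a strict Birkhoff contraction, and conclude $\mathrm{diam}\to 0$ --- does not work. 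Strict contraction at infinitely many steps is not enough: the Birkhoff coefficient of a block is $\tanh(\Delta_k/4)$ with $\Delta_k$ the projective diameter of the block's image, and since the positive blocks now have unbounded length their entries are not ``drawn from a finite set'' (that remark only applies to the single substitution matrices), so the coefficients can tend to $1$ fast enough that their infinite product stays bounded away from $0$. A sequence of Hilbert-metric diameters that strictly decreases at infinitely many steps need not tend to zero, so your argument only yields nestedness, not collapse to a ray. Obtaining convergence for \emph{every} admissible Brun sequence, with no uniformity hypothesis, is precisely the nontrivial content of the lemma.

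The paper sidesteps projective contraction altogether: it invokes the result of Avila--Delecroix giving matrices $\tilde M_n$ that agree with $M_n$ on the relevant hyperplanes and satisfy the \emph{uniform} bound $\|\tr{(\tilde M_{[0,n)})}\|_\infty\le 1$ for all $n$ (this is an exact algebraic feature of Brun matrices --- compare Lemma~\ref{l:Br2}, where adding multiples of the frequency vector to a column makes the modified column have $\ell^1$-norm at most $1$, with no assumption on partial quotients). Then, for any $\mathbf v\in\mathbf f^\bot$, the pairing of $\mathbf v$ with the normalized columns of $M_{[0,n)}$ is at most $1/\|M_{[0,n)}\mathbf e_i\|_1$, and the recurrence condition gives $\min_i\|M_{[0,n)}\mathbf e_i\|_1\to\infty$, so the cone collapses onto $\mathbb R_+\mathbf f$ for every directive sequence. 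If you want to repair your route, you would need either this kind of everywhere-valid norm bound on the restriction to $\mathbf f^\bot$, or a quantitative control showing the product of the Birkhoff coefficients of your positive blocks tends to $0$; the latter is not available without hypotheses like bounded strong partial quotients, which is why the paper reserves the contraction-rate estimates for Theorem~\ref{t:balanced2}.
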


\begin{proof}
Let $(\sigma_n)_{n\in\mathbb{N}} \in S_\mathrm{Br}^\mathbb{N}$ be a directive sequence of a Brun word on $3$ letters, $M_n$~the associated incidence matrices, and $\mathbf{f} \in \bigcap_{n\in\mathbb{N}} M_{[0,n)}\, \mathbb{R}_+^d$.
From~\cite{Avila-Delecroix}, we know that there is a sequence of matrices $(\tilde{M}_n)_{n\in\mathbb{N}}$ such that $\|\tr{(\tilde{M}_{[0,n)})}\|_\infty \le 1$,
\begin{equation} \label{e:tMtilde}
\tr{(\tilde{M}_{[0,n)})}\, \mathbf{x} = \tr{(M_{[0,n)})}\, \mathbf{x} \quad \mbox{for all} \quad  \mathbf{x} \in \mathbf{f}^\bot,\ n \in \mathbb{N},
\end{equation}
where $\mathbf{f}^\bot$ denotes the hyperplane orthogonal to~$\mathbf{f}$; see also Section~\ref{sec:contr-brun-matr}.

For each $\mathbf{v} \in \mathbf{f}^\bot$ with $\|\mathbf{v}\|_\infty = 1$, we have 
\[
\bigg| \bigg\langle \mathbf{v}, \frac{M_{[0,n)}\,\mathbf{e}_i}{\|M_{[0,n)}\,\mathbf{e}_i\|_1} \bigg\rangle \bigg| = \bigg| \bigg\langle \tr{(\tilde{M}_{[0,n)})}\, \mathbf{v}, \frac{\mathbf{e}_i}{\|M_{[0,n)}\,\mathbf{e}_i\|_1} \bigg\rangle \bigg| \le \frac{1}{\|M_{[0,n)}\,\mathbf{e}_i\|_1} 
\]
for each unit vector~$\mathbf{e}_i$, $i \in \mathcal{A}$.
Since $\min_{i\in\mathcal{A}}\|M_{[0,n)}\,\mathbf{e}_i\|_1 \to \infty$ for each directive sequence of a Brun word, the cone $M_{[0,n)}\, \mathbb{R}_+^d$ tends to the line $\mathbb{R}_+\, \mathbf{f}$, i.e., \eqref{e:limitcone} holds. 
From~\eqref{e:limitcone}, it is standard to prove that $\mathbf{f}$ is the frequency vector of~$\omega$; see~\cite{Berthe-Delecroix}. \qed
\end{proof}

The proof of Lemma~\ref{l:frequencyBrun} could be adapted to Arnoux-Rauzy words because the incidence matrix of an Arnoux-Rauzy substitution is similar to a matrix given by the fully subtractive algorithm, which was studied in~\cite{Avila-Delecroix}.
However, we prefer using the results of~\cite{Avila-Delecroix} in a different way in the proof of the following lemma.

\begin{lemma} \label{l:frequencyAR}
Each Arnoux-Rauzy word (on $d \ge 2$ letters) has letter frequencies.
\end{lemma}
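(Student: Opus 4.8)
The plan is to establish the limit-cone identity~\eqref{e:limitcone} for Arnoux-Rauzy words and then to conclude exactly as in the proof of Lemma~\ref{l:frequencyBrun}. Since balancedness and letter frequencies depend only on the language, I may assume that $\omega$ is a \emph{standard} Arnoux-Rauzy word, with directive sequence $(\sigma_n)_{n\in\mathbb{N}} = (\alpha_{i_n})_{n\in\mathbb{N}} \in S_\mathrm{AR}^\mathbb{N}$ in which every $\alpha_i$, $i\in\mathcal{A}$, occurs infinitely often. Write $M_n = M_{\alpha_{i_n}}$ and pick a vector $\mathbf{f}$ in the cone $\bigcap_{n\in\mathbb{N}} M_{[0,n)}\,\mathbb{R}_+^d \subseteq \mathbb{R}_+^d$, which is nonempty as a decreasing intersection of nonempty compact sets after intersecting with a sphere; the argument below will show a posteriori that this cone is the single ray $\mathbb{R}_+\,\mathbf{f}$.

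The key observation is that $M_{\alpha_i} = I + \mathbf{e}_i\,\tr{(\mathbf{1} - \mathbf{e}_i)}$, where $\mathbf{1} = \tr{(1,\dots,1)}$, so that $\tr{M_{\alpha_i}} = I + (\mathbf{1} - \mathbf{e}_i)\,\tr{\mathbf{e}_i}$ is precisely the incidence matrix (additive version) of the fully subtractive algorithm with distinguished coordinate~$i$. Hence $\tr{M_{[0,n)}} = \tr{M_{n-1}}\cdots\tr{M_0}$ is a product of fully subtractive incidence matrices, and being a standard Arnoux-Rauzy directive sequence translates into the associated trajectory lying in the regime covered by~\cite{Avila-Delecroix}. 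I would then apply those results to the cocycle $(\tr{M_n})_{n\in\mathbb{N}}$ \emph{directly}, rather than — as in Lemma~\ref{l:frequencyBrun} — to the restriction of the Brun cocycle to a hyperplane, obtaining matrices $(\tilde{M}_n)_{n\in\mathbb{N}}$ with $\|\tr{(\tilde{M}_{[0,n)})}\|_\infty$ bounded and $\tr{(\tilde{M}_{[0,n)})}\,\mathbf{x} = \tr{(M_{[0,n)})}\,\mathbf{x}$ for all $\mathbf{x}\in\mathbf{f}^\bot$ and $n\in\mathbb{N}$, in the spirit of~\eqref{e:tMtilde}.

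From here the argument mirrors Lemma~\ref{l:frequencyBrun}. For each $\mathbf{v}\in\mathbf{f}^\bot$ with $\|\mathbf{v}\|_\infty = 1$ and each $i\in\mathcal{A}$,
\[
\bigg|\bigg\langle \mathbf{v},\ \frac{M_{[0,n)}\,\mathbf{e}_i}{\|M_{[0,n)}\,\mathbf{e}_i\|_1}\bigg\rangle\bigg| = \frac{\big|\big\langle \tr{(\tilde{M}_{[0,n)})}\,\mathbf{v},\ \mathbf{e}_i\big\rangle\big|}{\|M_{[0,n)}\,\mathbf{e}_i\|_1} \le \frac{\|\tr{(\tilde{M}_{[0,n)})}\,\mathbf{v}\|_\infty}{\|M_{[0,n)}\,\mathbf{e}_i\|_1},
\]
which is a constant over $\|M_{[0,n)}\,\mathbf{e}_i\|_1 = |\sigma_{[0,n)}(i)|$. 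It then remains to check that $\min_{i\in\mathcal{A}} |\sigma_{[0,n)}(i)| \to \infty$: the sequence $\ell_n := \min_i |\sigma_{[0,n)}(i)|$ is non-decreasing because $\sigma_{[0,n+1)}(i)$ equals $\sigma_{[0,n)}(i)$ when $i = i_n$ and $\sigma_{[0,n)}(i_n)\,\sigma_{[0,n)}(i)$ otherwise, so that every letter other than $i_n$ has image of length at least $2\ell_n$ at step $n+1$; consequently $\ell_{n+1} = \ell_n$ forces the set of shortest letters at step $n+1$ to be exactly $\{i_n\}$ and hence $i_{n+1} = i_n$, and if $(\ell_n)$ were eventually constant this would make $(\sigma_n)$ eventually constant, contradicting standardness. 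Therefore every normalized column $M_{[0,n)}\,\mathbf{e}_i/\|M_{[0,n)}\,\mathbf{e}_i\|_1$ becomes orthogonal to $\mathbf{f}^\bot$ in the limit, i.e.\ converges to $\mathbf{f}/\|\mathbf{f}\|_1$, so $M_{[0,n)}\,\mathbb{R}_+^d$ tends to $\mathbb{R}_+\,\mathbf{f}$ and~\eqref{e:limitcone} holds; as in Lemma~\ref{l:frequencyBrun}, \eqref{e:limitcone} then implies that $\mathbf{f}$ is the frequency vector of~$\omega$; see~\cite{Berthe-Delecroix}.

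I expect the main obstacle to be Step~2, the invocation of~\cite{Avila-Delecroix}: one must verify that a standard Arnoux-Rauzy directive sequence, after transposition, really yields an \emph{admissible} fully subtractive trajectory satisfying the recurrence hypotheses under which~\cite{Avila-Delecroix} provides the bounded modified cocycle $(\tilde{M}_n)$ — in particular that the reversal of order induced by transposition is harmless, that the construction works in arbitrary dimension~$d$, and that ``every $\alpha_i$ occurs infinitely often'' is the correct translation of their hypothesis. The remaining steps — the estimate above, the growth of $\min_i |\sigma_{[0,n)}(i)|$, and the passage from~\eqref{e:limitcone} to letter frequencies — are routine or identical to Lemma~\ref{l:frequencyBrun}.
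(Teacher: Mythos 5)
Your proposal is correct in outline but follows a genuinely different route from the paper --- in fact exactly the route the authors mention just before the lemma (``the proof of Lemma~\ref{l:frequencyBrun} could be adapted \dots'') and then deliberately avoid. The paper uses \cite{Avila-Delecroix} in the form~\eqref{e:Mtilde}: the \emph{untransposed} cocycle $M_{[0,n)}$, modified so as to be bounded on the moving hyperplanes $(M_{[0,n)})^{-1}\mathbf{1}^\bot$ (this is what Lemma~\ref{l:AR2} later constructs explicitly from the vectors $\mathbf{v}^{(n)}$), combined with the Dumont--Thomas decomposition $\omega_{[0,k)}=\sigma_{[0,m-1)}(p_{m-1})\cdots p_0$, where each $p_n$ is empty or the single letter $i_n$; this yields $\|\pi_0\,\Ab(\omega_{[0,k)})\|_\infty\le m(k)$, and sublinearity of $m(k)$ gives the frequencies. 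You instead mirror Lemma~\ref{l:frequencyBrun}: a bounded modified \emph{transposed} cocycle on $\mathbf{f}^\bot$ as in~\eqref{e:tMtilde}, the elementary growth of $\min_{i}|\sigma_{[0,n)}(i)|$, the limit-cone identity~\eqref{e:limitcone}, and then the standard argument of \cite{Berthe-Delecroix}. Your route is structurally uniform with the Brun case and works in any dimension; the paper's route needs only contraction relative to the explicit $\mathbf{1}$-orbit $\mathbf{v}^{(n)}$, which the authors can (and do) construct by hand in Section~\ref{sec:contr-arno-rauzy} and reuse for Proposition~\ref{p:bounded3}, and it additionally gives a quantitative discrepancy bound for prefixes.

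The one step you must genuinely verify --- you flag it yourself --- is the existence, for \emph{every} Arnoux-Rauzy directive sequence, of matrices satisfying~\eqref{e:tMtilde} with $\sup_n\|\tr{(\tilde{M}_{[0,n)})}\|_\infty<\infty$. Two remarks here. First, the order reversal is not cosmetic: $\tr{(M_{[0,n)})}=\tr{M_{n-1}}\cdots\tr{M_0}$, so what you need is a statement about the dual (fully subtractive type) cocycle over the Arnoux-Rauzy continued fraction dynamics on the vectors $\mathbf{f}^{(n)}\propto(M_{[0,n)})^{-1}\mathbf{f}$, restricted to $(\mathbf{f}^{(n)})^\bot$; unlike the Brun case (Lemma~\ref{l:Br2}), the naive analogue --- subtracting $\mathbf{f}^{(n)}/f^{(n)}_{i_n}$ from each column $j\ne i_n$ of $M_n$ --- produces factors of column norm close to~$2$ (e.g.\ for $\mathbf{f}^{(n)}$ near $\tr{(\tfrac12,\tfrac12,0)}$), so boundedness of the products is not a one-line consequence and must really be extracted from \cite{Avila-Delecroix}; this is plausibly why the authors route through~\eqref{e:Mtilde} instead. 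Second, a small slip in your auxiliary argument: $\ell_{n+1}=\ell_n$ alone does not force $i_{n+1}=i_n$; since $\ell_{n+1}=|\sigma_{[0,n)}(i_n)|$ and $i_n$ is the unique shortest letter at step $n+1$, you need two consecutive equalities $\ell_{n+2}=\ell_{n+1}=\ell_n$ to conclude $i_{n+1}=i_n$ --- but the intended conclusion, that an eventually constant $(\ell_n)$ forces an eventually constant directive sequence and contradicts standardness, is correct as stated.
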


\begin{proof}
Let $(\sigma_n)_{n\in\mathbb{N}} = (\alpha_{i_n})_{n\in\mathbb{N}}$ be the directive sequence of an Arnoux-Rauzy word and $\mathbf{f} \in \bigcap_{n\in\mathbb{N}} M_{[0,n)}\, \mathbb{R}_+^d$ with $\|\mathbf{f}\|_1 = 1$. 
We know from the results on the fully subtractive algorithm in~\cite{Avila-Delecroix} that there is a sequence of matrices $(\tilde{M}_n)_{n\in\mathbb{N}}$ such that $\|\tilde{M}_{[0,n)}\|_\infty \le 1$ and
\begin{equation} \label{e:Mtilde}
\tilde{M}_{[0,n)}\, \mathbf{x} = M_{[0,n)}\, \mathbf{x} \quad \mbox{for all} \quad  \mathbf{x} \in (M_{[0,n)})^{-1} \mathbf{1}^\bot,\ n \in \mathbb{N},
\end{equation}
where $\mathbf{1} = \tr{(1,\ldots,1)}$; see also Section~\ref{sec:contr-arno-rauzy}.

Denote by $\pi_n$ be the projection along $(M_{[0,n)})^{-1} \mathbf{f}$ onto $(M_{[0,n)})^{-1} \mathbf{1}^\bot$.
Then
\[
\|\pi_0\,M_{[0,n)}\,\Ab(i_n)\|_\infty = \|M_{[0,n)}\,\pi_n\,\Ab(i_n)\|_\infty = \|\tilde{M}_{[0,n)}\,\pi_n\,\Ab(i_n)\|_\infty \le \|\pi_n\,\Ab(i_n)\|_\infty.
\]
Since $(M_{[0,n)})^{-1} \mathbf{f} \in M_n\, \mathbb{R}_+^d$, the $i_n$-th coordinate of $(M_{[0,n)})^{-1} \mathbf{f}$ is larger than or equal to (the sum of) the other coordinates, thus $\|\pi_n\,\Ab(i_n)\|_\infty \le \|\Ab(i_n)\|_\infty = 1$.

Following Dumont and Thomas~\cite{Dumont-Thomas89}, each prefix $\omega_{[0,k)}$ of~$\omega$ can be written as
\[
\omega_{[0,k)} = \sigma_{[0,m-1)}(p_{m-1})\, \sigma_{[0,m-2)}(p_{m-2})\, \cdots\, \sigma_{[0,1)}(p_1)\, p_0,
\]
with a sequence of words $(p_n)_{0\le n<m}$ defined in the following way. 
Let $m = m(k) \in \mathbb{N}$ be minimal such that $|\sigma_{[0,m)}(i_m)| > k$.
Then there is a unique prefix $p_{m-1}$ of $\sigma_{m-1}(i_m)$ such that 
\[
|\sigma_{[0,m-1)}(p_{m-1})| \le k < |\sigma_{[0,m-1)}(p_{m-1} a_{m-1})|,
\]
with $a_{m-1} \in \mathcal{A}$ being the letter following~$p_{m-1}$ in~$\sigma_{m-1}(i_m)$.
Inductively, we obtain for $0 \le n < m$ unique $p_n \in \mathcal{A}^*$ and $a_n \in \mathcal{A}$  such that 
\[
|\sigma_{[0,n)}(p_n)| \le k - \sum_{j=n+1}^{m-1}|\sigma_{[0,j)}(p_j)| < |\sigma_{[0,n)}(p_n a_n)|
\] 
and $p_n a_n$ is a prefix of $|\sigma_n(a_{n+1})|$, with $a_m = i_m$.
We thus have
\[
\Ab(\omega_{[0,k)}) = \sum_{n=0}^{m-1} \Ab\big(\sigma_{[0,n)}(p_n)\big) = \sum_{n=0}^{m-1} \pi_0\, M_{[0,n)}\, \Ab(p_n).
\]
By the definition of the Arnoux-Rauzy substitutions, $p_n$~is either empty or equal to~$i_n$, thus
\[
\big\| \pi_0\, \Ab(\omega_{[0,k)}) \big\|_\infty \le  \sum_{n=0}^{m-1} \big\| \pi_0\, M_{[0,n)}\, \Ab(i_n) \big\|_\infty \le m.
\]
Since $m(k) / \|\Ab(\omega_{[0,k)})\|_\infty \to 0$ as $k \to \infty$, the direction of $\Ab(\omega_{[0,k)})$ converges to that of~$\mathbf{f}$, thus $\mathbf{f}$~is the frequency vector of~$\omega$. \qed
\end{proof}

\section{Discrepancy and balancedness}
Let $\omega$ be an infinite word with frequency vector $\mathbf{f} = \tr{(f_1, f_2, \ldots, f_d)}$, and denote by $\pi$ be the projection along~$\mathbf{f}$ onto~$\mathbf{1}^\bot$.
It is easily written down in coordinates: 
\[
\pi\, \Ab(w) = \tr{(|w|_1 - |w|\, f_1,\, |w|_2 - |w|\,f_2,\, \ldots,\, |w|_d - |w|\, f_d)}.
\]
Note that the so called Rauzy fractal is the closure of $\{\pi\, \Ab(\omega_{[0,n)}): n \in \mathbb{N}\}$, which is the projection of the vertices of the broken line associated with~$\omega$.

More generally, for a function $\phi: \mathcal{A} \rightarrow \mathbb{R}$, we consider the Birkhoff sums
\[
S_n(\phi,\omega) = \phi(\omega_0) + \phi(\omega_1) + \cdots + \phi(\omega_{n-1}).
\]
Remark that, if $\chi_i$ denotes the characteristic function of a letter $i \in \mathcal{A}$, then $S_n(\chi_i,\omega) = |\omega_{[0,n)}|_i$, and the coordinates of $\pi\, \Ab(\omega_{[0,n)})$ are $S_n(\chi_i - f_i, \omega)$.
The $\phi$-\emph{discrepancy} of $\omega$ is $\Delta(\phi,\omega) = \sup_{n\in\mathbb{N}} |S_n(\phi,\omega)|$.
We set
\[
\Delta(\omega) = \max_{i\in\mathcal{A}} \Delta(\chi_i - f_i, \omega) = \sup_{n\in\mathbb{N}} \| \pi\, \Ab(\omega_{[0,n)}) \|_\infty,
\]
and say that $\omega$ has \emph{finite discrepancy} if $\Delta(\omega) < \infty$.
The following result from~\cite[Proposition~7 and Remark~8]{Adamczewski03}  establishes a link between balance and discrepancy. 

\begin{lemma} \label{l:bounded}
We have $\Delta(\omega) \le B(\omega) \le 4 \Delta(\omega)$.
\end{lemma}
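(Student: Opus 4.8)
The plan is to prove the two inequalities $\Delta(\omega) \le B(\omega)$ and $B(\omega) \le 4\,\Delta(\omega)$ separately, relating balance (a statement about all pairs of equal-length factors) to discrepancy (a statement about prefixes only).

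For the lower bound $\Delta(\omega) \le B(\omega)$, first I would observe that for any factor $u = \omega_{[k,\ell)}$ of~$\omega$ one has $\Ab(u) = \Ab(\omega_{[0,\ell)}) - \Ab(\omega_{[0,k)})$, hence $\pi\,\Ab(u) = \pi\,\Ab(\omega_{[0,\ell)}) - \pi\,\Ab(\omega_{[0,k)})$ because $\pi$ is linear. The key point is that, since $\omega$ has a frequency vector, the prefixes $\omega_{[0,n)}$ are themselves factors, and taking $u = \omega_{[0,n)}$ together with another occurrence of a factor of the same length whose letter counts are close to the frequencies would pin down $|\,|u|_j - |v|_j| $ from below by something comparable to $\|\pi\,\Ab(\omega_{[0,n)})\|_\infty$. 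Concretely: if $\omega$ is $C$-balanced then for every $n$ and every $j$ the quantity $|\omega_{[0,n)}|_j$ lies within~$C$ of $|\omega_{[kn,\,kn+n)}|_j$ for all shifts~$k$; averaging over~$k$ (or using that the frequency is the limit of such averages) forces $|\,|\omega_{[0,n)}|_j - n f_j\,| \le C$, i.e. $\|\pi\,\Ab(\omega_{[0,n)})\|_\infty \le C$, so $\Delta(\omega)\le B(\omega)$.

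For the upper bound $B(\omega)\le 4\,\Delta(\omega)$, suppose $\Delta(\omega) = \Delta < \infty$ and let $u,v$ be factors with $|u|=|v|=n$. Writing $u$ and $v$ each as a difference of two prefix-abelianizations projected by~$\pi$, namely $\pi\,\Ab(u) = \pi\,\Ab(\omega_{[0,\ell)}) - \pi\,\Ab(\omega_{[0,k)})$ and similarly for~$v$, each of the four prefix terms has sup-norm at most~$\Delta$. Then for each letter~$j$,
\[
\big|\,|u|_j - |v|_j\,\big| = \big|\,(|u|_j - n f_j) - (|v|_j - n f_j)\,\big| \le \|\pi\,\Ab(u)\|_\infty + \|\pi\,\Ab(v)\|_\infty \le 4\Delta,
\]
using the triangle inequality twice (once to bound each of $\|\pi\,\Ab(u)\|_\infty$ and $\|\pi\,\Ab(v)\|_\infty$ by $2\Delta$). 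Hence $\omega$ is $4\Delta$-balanced, giving $B(\omega) \le 4\,\Delta(\omega)$, and in particular finite discrepancy is equivalent to finite balancedness.

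I expect the genuine obstacle to be the lower bound, and specifically the averaging argument that converts a bound on differences of letter-counts over equal-length factors into a bound on the deviation of a single prefix from its expected value $n f_j$. One must be careful that the frequency is realized as a limit of Cesàro-type averages of factor statistics and that $C$-balancedness propagates cleanly through this limit; this is exactly the content cited from Adamczewski's Proposition~7 and Remark~8, and I would invoke it rather than reprove the measure-theoretic details. The upper bound, by contrast, is a pure triangle-inequality computation once the Dumont--Thomas-free observation $\pi\,\Ab(\omega_{[k,\ell)}) = \pi\,\Ab(\omega_{[0,\ell)}) - \pi\,\Ab(\omega_{[0,k)})$ is in place.
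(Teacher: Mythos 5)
Your proof is correct, but note that the paper itself gives no proof of this lemma: it is quoted verbatim from \cite[Proposition~7 and Remark~8]{Adamczewski03}, so the comparison is with that reference rather than with an argument in the text. Your two inequalities are established by the standard argument and are essentially complete. The upper bound is, as you say, pure triangle inequality: $\pi\,\Ab(\omega_{[k,\ell)}) = \pi\,\Ab(\omega_{[0,\ell)}) - \pi\,\Ab(\omega_{[0,k)})$ gives $\|\pi\,\Ab(u)\|_\infty \le 2\Delta(\omega)$ for every factor $u$, and comparing two equal-length factors yields $4\Delta(\omega)$. For the lower bound, your hedge about ``measure-theoretic details'' is unnecessary: the averaging step is elementary once the frequencies exist, since for fixed $n$ one has
\[
\frac{1}{K}\sum_{k=0}^{K-1}\big|\omega_{[kn,(k+1)n)}\big|_j \;=\; \frac{1}{K}\big|\omega_{[0,Kn)}\big|_j \;\longrightarrow\; n f_j \quad (K\to\infty),
\]
while $C$-balancedness keeps every block count $|\omega_{[kn,(k+1)n)}|_j$ within $C$ of $|\omega_{[0,n)}|_j$, so in the limit $\bigl|\,|\omega_{[0,n)}|_j - n f_j\,\bigr| \le C$, i.e.\ $\Delta(\omega) \le B(\omega)$; there is no need to fall back on the citation here. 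The one hypothesis you should state explicitly is the existence of the letter frequencies, without which $\Delta(\omega)$ is not even defined; the paper secures this separately (Lemmas~\ref{l:frequencyBrun} and~\ref{l:frequencyAR} for Brun and Arnoux-Rauzy words, and \cite{Berthe-Tijdeman:02} for finitely balanced words), and your averaging argument uses exactly this existence. With that caveat, your argument is a correct, self-contained substitute for the reference the paper relies on.
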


For many words, balancedness can be shown using the following proposition.

\begin{proposition} \label{p:bounded3}
Let $\omega$ be an Arnoux-Rauzy or Brun word with directive sequence $(\sigma_n)_{n\in\mathbb{N}}$.
For each sequence of matrices $(\tilde{M}_n)_{n\in\mathbb{N}}$ satisfying~\eqref{e:tMtilde}, we have
\[
\Delta(\omega) \le \sum_{n=0}^\infty \big\|\tr{(\tilde{M}_{[0,n)})}\big\|_\infty.
\]
For each sequence of matrices $(\tilde{M}_n)_{n\in\mathbb{N}}$ satisfying~\eqref{e:Mtilde}, we have
\[
\Delta(\omega) \le \sum_{n=0}^\infty \big\|\tilde{M}_{[0,n)}\big\|_\infty.
\]
\end{proposition}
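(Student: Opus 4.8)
The plan is to combine the Dumont--Thomas decomposition of prefixes, exactly as in the proof of Lemma~\ref{l:frequencyAR}, with the contraction estimates~\eqref{e:tMtilde} and~\eqref{e:Mtilde}. Fix a prefix $\omega_{[0,k)}$ and write it as $\sigma_{[0,m-1)}(p_{m-1})\, \sigma_{[0,m-2)}(p_{m-2}) \cdots \sigma_{[0,1)}(p_1)\, p_0$ with the words $p_n$ as in that proof; recall that each $p_n$ is either empty or a single letter (for Arnoux-Rauzy, $p_n \in \{\varepsilon, i_n\}$; for Brun one checks similarly from the definition of $\beta_{ij}$ that $p_n$ is empty or a single letter, the relevant letter of $\sigma_n$'s image). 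Applying $\pi\,\Ab$ and using $\Ab(\sigma_{[0,n)}(w)) = M_{[0,n)}\,\Ab(w)$, we get
\[
\pi\,\Ab(\omega_{[0,k)}) = \sum_{n=0}^{m-1} \pi\, M_{[0,n)}\, \Ab(p_n).
\]
Since $\pi$ is the projection along $\mathbf{f}$ onto $\mathbf{1}^\bot$ and $\mathbf{1}^\bot$ is exactly the space of abelianizations of pairs of equal-length words, for each $n$ the vector $\pi\, M_{[0,n)}\, \Ab(p_n)$ lies in $\mathbf{1}^\bot$ and is obtained from $M_{[0,n)}\,\Ab(p_n)$ by subtracting a multiple of $\mathbf{f}$.

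The key step is to replace $\pi\, M_{[0,n)}\,\Ab(p_n)$ by $\tilde{M}_{[0,n)}$ applied to a suitable projection of $\Ab(p_n)$, so that the contraction bound applies. For the Arnoux-Rauzy case I would use the argument already displayed in Lemma~\ref{l:frequencyAR}: let $\pi_n$ be the projection along $(M_{[0,n)})^{-1}\mathbf{f}$ onto $(M_{[0,n)})^{-1}\mathbf{1}^\bot$, so that $\pi\, M_{[0,n)} = M_{[0,n)}\,\pi_n$ on all of $\mathbb{R}^d$, and then $M_{[0,n)}\,\pi_n\,\Ab(p_n) = \tilde{M}_{[0,n)}\,\pi_n\,\Ab(p_n)$ by~\eqref{e:Mtilde}, since $\pi_n\,\Ab(p_n) \in (M_{[0,n)})^{-1}\mathbf{1}^\bot$. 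Because $p_n$ is a single letter (or empty), $\|\Ab(p_n)\|_\infty \le 1$, and the admissibility condition $(M_{[0,n)})^{-1}\mathbf{f} \in M_n\,\mathbb{R}_+^d$ forces the relevant coordinate of $(M_{[0,n)})^{-1}\mathbf{f}$ to dominate, giving $\|\pi_n\,\Ab(p_n)\|_\infty \le 1$; hence $\|\pi\,M_{[0,n)}\,\Ab(p_n)\|_\infty \le \|\tilde{M}_{[0,n)}\|_\infty$. For the Brun case the analogue uses the transpose version~\eqref{e:tMtilde}: one writes $\langle \mathbf{v},\, \pi\,M_{[0,n)}\,\Ab(p_n)\rangle = \langle \tr{(\tilde{M}_{[0,n)})}\,\pi'\mathbf{v},\, \Ab(p_n)\rangle$ for any test vector $\mathbf{v}$ with $\|\mathbf{v}\|_\infty\le 1$, with $\pi'$ the projection dual to $\pi$, bounding the right side by $\|\tr{(\tilde{M}_{[0,n)})}\|_\infty\,\|\Ab(p_n)\|_1 \le \|\tr{(\tilde{M}_{[0,n)})}\|_\infty$ since $p_n$ is one letter. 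Summing over $n$ from $0$ to $m-1$ and then taking the supremum over all $k$ (equivalently over $m$) yields
\[
\Delta(\omega) = \sup_k \|\pi\,\Ab(\omega_{[0,k)})\|_\infty \le \sum_{n=0}^\infty \|\tilde{M}_{[0,n)}\|_\infty
\]
in the Arnoux-Rauzy case, and the corresponding transpose bound in the Brun case.

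The main obstacle is bookkeeping around the projections: making sure that the intertwining $\pi\,M_{[0,n)} = M_{[0,n)}\,\pi_n$ is exact (it is, because $\mathbf{f}$ and $\mathbf{1}$ are related through $M_{[0,n)}$ — note $\tr{M_{[0,n)}}\mathbf{1}$ is not $\mathbf{1}$, so one must check that $\pi_n$ maps into $(M_{[0,n)})^{-1}\mathbf{1}^\bot$, which is where~\eqref{e:Mtilde} is stated), and that the single-letter bound $\|\pi_n\,\Ab(p_n)\|_\infty \le 1$ survives the projection — this is the content of the admissibility of the continued fraction step, i.e. the $i_n$-th (resp.\ appropriate) coordinate being the largest. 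Once those two points are pinned down, the estimate is a direct term-by-term comparison, and the two cases are genuinely parallel modulo transposition. I would present the Arnoux-Rauzy case in full and indicate that the Brun case follows by the same computation applied to transposes, using~\eqref{e:tMtilde} in place of~\eqref{e:Mtilde}.
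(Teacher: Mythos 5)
Your proposal is correct and follows essentially the same route as the paper: the Dumont--Thomas prefix decomposition with $p_n$ empty or a single letter, combined with the identity \eqref{e:Mtilde} via the intertwined projections $\pi_n$ (exactly as in Lemma~\ref{l:frequencyAR}) for one bound, and the dual computation with test vectors in $\mathbf{f}^\bot$ (the paper uses $\mathbf{e}_i - f_i\,\mathbf{1}$, which is your $\pi'\mathbf{e}_i$) via \eqref{e:tMtilde} for the other. The only point to pin down is the bound $\|\mathbf{e}_i - f_i\,\mathbf{1}\|_\infty \le 1$ in the transpose case, which is precisely the remark made in the paper's proof.
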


\begin{proof}
The first statement follows from 
\begin{align*}
\Delta(\chi_i - f_i, \omega) & = \sup_{k\in\mathbb{N}} \big| \big\langle \mathbf{e}_i - f_i\, \mathbf{1}, \Ab(\omega_{[0,k)}) \big\rangle \big| \le \sum_{n=0}^\infty \big| \big\langle \mathbf{e}_i - f_i\, \mathbf{1}, M_{[0,n)}\, \Ab(i_n) \big\rangle \big| \\
& = \sum_{n=0}^\infty \big| \big\langle \tr{(\tilde{M}_{[0,n)})}\, (\mathbf{e}_i - f_i\, \mathbf{1}), \Ab(i_n) \big\rangle \big| \le \sum_{n=0}^\infty \big\|\tr{(\tilde{M}_{[0,n)})}\big\|_\infty,
\end{align*}
where we have used the Dumont-Thomas representations in the first inequality (see the proof of Lemma~\ref{l:frequencyAR}), the fact that $\mathbf{e}_i - f_i\, \mathbf{1} \in \mathbf{f}^\bot$ in the second equality and $\|\mathbf{e}_i - f_i\, \mathbf{1}\|_\infty \le 1$ in the last inequality.
The proof of the second statement runs along the lines of the proof of Lemma~\ref{l:frequencyAR}, where we can replace $\alpha_{i_n}$ by~$\beta_{i_n j_n}$. \qed
\end{proof}

\section{Contractivity of Arnoux-Rauzy matrices} \label{sec:contr-arno-rauzy}
Now we study the contractivity of Arnoux-Rauzy matrices on certain hyper\-planes, quantifying the approach in~\cite{Avila-Delecroix}.
For a directive sequence $(\sigma_n)_{n\in\mathbb{N}}$, let
\[
\mathbf{v}^{(n)} = \tr{\big(v^{(n)}_1,v^{(n)}_2,\ldots,v^{(n)}_d\big)} = \frac{\tr{(M_{[0,n)})}\, \mathbf{1}}{\|\tr{(M_{[0,n)})}\, \mathbf{1}\|_1} \qquad (n \in \mathbb{N}). 
\]

\begin{lemma} \label{l:AR1}
Let $\omega$ be an Arnoux-Rauzy  word with directive sequence $(\alpha_{i_n})_{n\in\mathbb{N}}$.
Then $\|\mathbf{v}^{(n)}\|_\infty < \frac{1}{d-1}$ for all $n \in \mathbb{N}$. 
If moreover $\{i_n,i_{n+1},\ldots,i_{n+h-1}\} = \mathcal{A}$, $h \in \mathbb{N}$, then $\|\mathbf{v}^{(n+h)}\|_\infty < \frac{2^h-1}{2^h(d-1)}$. 
\end{lemma}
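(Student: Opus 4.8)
The plan is to track the vector $\mathbf{v}^{(n)} = \tr{(M_{[0,n)})}\,\mathbf{1} / \|\tr{(M_{[0,n)})}\,\mathbf{1}\|_1$ under one further application of an Arnoux-Rauzy substitution. Observe that $\tr{(M_{[0,n+1)})} = \tr{M_n}\,\tr{(M_{[0,n)})}$, so passing from $\mathbf{v}^{(n)}$ to $\mathbf{v}^{(n+1)}$ amounts to applying $\tr{M_{i_n}}$ and renormalizing in the $\ell^1$-norm. The key computational fact is that $\tr{M_{\alpha_i}}$ acts on a nonnegative vector $\mathbf{w} = \tr{(w_1,\ldots,w_d)}$ by sending it to the vector whose $i$-th coordinate becomes $w_i + \sum_{j\ne i} w_j = \|\mathbf{w}\|_1$ and whose other coordinates are unchanged; here I use that the columns of $M_{\alpha_i}$ are $\mathbf{e}_i + \mathbf{e}_j$ for $j \ne i$ and $\mathbf{e}_i$ for $j = i$, so $\tr{M_{\alpha_i}}$ reads off, in row $j \ne i$, just $w_j$, and in row $i$, the full sum $\sum_k w_k$. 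Consequently, if we write $s = \|\mathbf{w}\|_1$, then $\|\tr{M_{\alpha_i}}\mathbf{w}\|_1 = 2s - w_i$, and after renormalization the $i$-th coordinate of the new vector is $\frac{s}{2s - w_i}$ while the $j$-th coordinate for $j \ne i$ is $\frac{w_j}{2s-w_i}$.

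For the first assertion, I would argue as follows. Since the directive sequence of an Arnoux-Rauzy word contains each letter infinitely often, for every $n$ there was a most recent index $m < n$ with $i_m$ equal to whichever letter achieves the maximum of $\mathbf{v}^{(n)}$ — more cleanly, I would show directly that every coordinate of $\mathbf{v}^{(n)}$ is $< \frac1{d-1}$ by induction, using that at each step the coordinate that was just set (the $i_{n}$-th one) equals $\frac{s}{2s - w_{i_n}}$ with $w_{i_n} \in (0, s)$ (strictly, since $\mathbf{v}^{(n)}$ has all positive coordinates once $n \ge 1$), hence lies in $(\frac12, 1)$ — wait, that is not yet $< \frac1{d-1}$ for $d = 2$ it is fine but for $d \ge 3$ this needs more. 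The correct normalization is that $\|\mathbf{v}^{(0)}\|_\infty = \frac1d < \frac1{d-1}$, and one checks the bound $\frac1{d-1}$ is preserved: with $\mathbf{w}$ satisfying $\|\mathbf{w}\|_1 = 1$ and $\|\mathbf{w}\|_\infty < \frac1{d-1}$, the new coordinates are $\frac{w_j}{2 - w_i} < \frac{1/(d-1)}{2 - 1/(d-1)} = \frac{1}{2d-3} \le \frac1{d-1}$ for $j \ne i$, and the new $i$-th coordinate is $\frac{1}{2-w_i} < \frac{1}{2 - 1/(d-1)} = \frac{d-1}{2d-3} \le \frac1{d-1}$ exactly when $d-1 \le 2d-3$, i.e. $d \ge 2$. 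So the invariant propagates, and the strict inequality at $n = 0$ stays strict.

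For the second, quantitative assertion I would exploit that once we know $\|\mathbf{v}^{(n)}\|_\infty < \frac{1}{d-1}$, the coordinate set at the step $i_n$ is $\frac{1}{2 - w_{i_n}}$ where $w_{i_n} < \frac1{d-1}$, giving a value $< \frac{d-1}{2d-3}$; but more to the point, any coordinate that was set at some step and then not touched again gets repeatedly divided by factors $2 - w_{i_k} > 1$, so after seeing each letter at least once in the window $\{i_n,\ldots,i_{n+h-1}\}$, every coordinate of $\mathbf{v}^{(n+h)}$ has been "reset and then pushed down" or "pushed down $h$ times". The cleanest route is: track the maximum coordinate $M_k = \|\mathbf{v}^{(k)}\|_\infty$ and show $M_{k+1} \le \frac{M_k + t_k}{2 - M_k}$-type recursions are dominated by a geometric contraction with ratio $\frac12$ plus a correction, yielding after $h$ steps in which all letters appear a bound of the shape $\frac{1}{d-1}\bigl(1 - 2^{-h}\bigr)$, i.e. $\frac{2^h-1}{2^h(d-1)}$. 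The main obstacle is getting the bookkeeping of "which coordinate was last updated when" precise enough: one must argue that after the window covers all of $\mathcal A$, the coordinate that is currently the maximum has been strictly contracted relative to the $\frac1{d-1}$ bound by a factor accumulating to $2^{-h}$, and the honest way to do this is to prove a one-step estimate $\|\mathbf v^{(k+1)}\|_\infty \le \tfrac12\|\mathbf v^{(k)}\|_\infty + \tfrac12 c_k$ with $\sum$-type control, or better, to show by induction on $h$ that $\|\mathbf v^{(n+h)}\|_\infty < \frac1{d-1} - \frac{1}{2^h(d-1)}$ using the explicit update formulas above and the fact that the new maximum is either a freshly-set coordinate $\frac1{2-w_{i}}$ (bounded using the inductive gap on $w_i$) or an old coordinate $\frac{w_j}{2-w_i}$ (bounded by $\frac12 w_j$ since $w_i > 0$). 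I expect the freshly-set-coordinate case to be the delicate one and to be where the precise constant $\frac{2^h-1}{2^h(d-1)}$ is forced.
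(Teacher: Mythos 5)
There is a genuine error at the very start: your ``key computational fact'' describes the action of $M_{\alpha_i}$, not of $\tr{M_{\alpha_i}}$. Since column $j$ of $M_{\alpha_i}$ is $\mathbf{e}_i+\mathbf{e}_j$ for $j\ne i$ and column $i$ is $\mathbf{e}_i$, the rows of $\tr{M_{\alpha_i}}$ are these columns, so for $\mathbf{w}\ge 0$ with $\|\mathbf{w}\|_1=1$ one has $(\tr{M_{\alpha_i}}\mathbf{w})_i=w_i$ and $(\tr{M_{\alpha_i}}\mathbf{w})_j=w_j+w_i$ for $j\ne i$, with new $\ell^1$-norm $1+(d-1)w_i$ --- not ``$i$-th coordinate becomes $\|\mathbf{w}\|_1$, others unchanged, norm $2-w_i$''. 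The difference is not cosmetic: under the true update the freshly touched coordinate becomes the \emph{small} one, $\frac{w_i}{1+(d-1)w_i}<\frac{1}{2(d-1)}$, whereas under your update it becomes the \emph{large} one, $\frac{1}{2-w_i}>\frac12$. For $d=3$ and $\mathbf{v}^{(0)}=\frac13\mathbf{1}$ your formula would give $(\frac35,\frac15,\frac15)$ after one step, already violating $\|\mathbf{v}^{(1)}\|_\infty<\frac1{d-1}=\frac12$ (the true value is $(\frac15,\frac25,\frac25)$). You partially mask this by an algebra slip: $\frac{d-1}{2d-3}\le\frac1{d-1}$ is equivalent to $(d-2)^2\le 0$, i.e.\ it holds only for $d=2$, not for all $d\ge2$ as you claim; so even within your (incorrect) dynamics the invariant does not propagate for $d\ge3$.

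With the corrected update the first assertion does propagate cleanly: if $\|\mathbf{w}\|_\infty<\frac1{d-1}$ then $\frac{w_j+w_i}{1+(d-1)w_i}<\frac{\frac1{d-1}+w_i}{1+(d-1)w_i}=\frac1{d-1}$ and $\frac{w_i}{1+(d-1)w_i}<\frac1{2(d-1)}$; this is exactly the content of the paper's simplex-image computation. For the second assertion your text is only a sketch (you name two candidate recursions but prove neither), and the framing ``the maximum coordinate contracts geometrically'' is off --- the maximum need not contract; what works is to track each coordinate from the last time its letter was used: $v^{(m+1)}_{i_m}<\frac1{2(d-1)}$, and if $v^{(m+r)}_{i_m}<\frac{2^r-1}{2^r(d-1)}$ then, by monotonicity of $t\mapsto\frac{w+t}{1+(d-1)t}$, $v^{(m+r+1)}_{i_m}\le\frac12\bigl(v^{(m+r)}_{i_m}+\frac1{d-1}\bigr)<\frac{2^{r+1}-1}{2^{r+1}(d-1)}$; since every letter occurs in $\{i_n,\ldots,i_{n+h-1}\}$, every coordinate of $\mathbf{v}^{(n+h)}$ is covered by such a bound with $r\le h$, which is how the constant $\frac{2^h-1}{2^h(d-1)}$ arises in the paper. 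As written, your proposal establishes neither assertion.
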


\begin{proof}
First note that $\|\mathbf{v}^{(0)}\|_\infty = \frac{1}{d}$.
Assume now that $\|\mathbf{v}^{(n)}\|_\infty < \frac{1}{d-1}$, and let w.l.o.g.\ $i_n = 1$. 
Then the simplex $\big\{\mathbf{x} \in \mathbb{R}^d: \|\mathbf{x}\|_1 = 1,\, \|\mathbf{x}\|_\infty \le \frac{1}{d-1}\big\}$ is mapped by $\tr{\!M_n}$ (after normalizing) to the simplex spanned by $\tr{\big(0,\frac{1}{d-1},\ldots,\frac{1}{d-1}\big)}$, $\tr{\big(\frac{1}{2(d-1)},\frac{1}{2(d-1)},\frac{1}{d-1},\ldots,\frac{1}{d-1}\big)}$, \dots, $\tr{\big(\frac{1}{2(d-1)},\frac{1}{d-1},\ldots,\frac{1}{d-1},\frac{1}{2(d-1)}\big)}$.
This shows that $\|\mathbf{v}^{(n+1)}\|_\infty < \frac{1}{d-1}$ and that $v^{(n+1)}_{i_n} < \frac{1}{2(d-1)}$. 
Similar considerations show that $v^{(n+h)}_{i_n} < \frac{2^h-1}{2^h(d-1)}$ for all $h \ge 1$.
If $\{i_n,i_{n+1},\ldots,i_{n+h-1}\} = \mathcal{A}$, then we obtain that $v_j^{(n+h)} < \frac{2^h-1}{2^h(d-1)}$ for all $j \in \mathcal{A}$. \qed
\end{proof}

\begin{lemma} \label{l:AR2}
Let $\omega$ be an Arnoux-Rauzy  word with directive sequence $(\alpha_{i_n})_{n\in\mathbb{N}}$.
Then there is a sequence of matrices $(\tilde{M}_n)_{n\in\mathbb{N}}$ satisfying~\eqref{e:Mtilde} with
\[
\big\|\tr{\!\tilde{M}_n}\, \mathbf{e}_{i_n}\big\|_1 = d - \|\mathbf{v}^{(n+1)}\|_\infty^{-1} < 1 
\]
and $\tr{\!\tilde{M}_n}\, \mathbf{e}_j = \mathbf{e}_j$ for all $j \in \mathcal{A} \setminus \{i_n\}$, $n \in \mathbb{N}$.
\end{lemma}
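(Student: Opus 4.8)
The goal is to construct, for each Arnoux-Rauzy substitution $\alpha_{i_n}$, a matrix $\tilde M_n$ that agrees with $M_n = M_{\alpha_{i_n}}$ on the relevant hyperplane while having controlled operator norm on the transpose side, and then to compute $\|\tr{\tilde M_n}\,\mathbf{e}_{i_n}\|_1$ exactly in terms of $\mathbf{v}^{(n+1)}$. I would start by writing down the incidence matrix $M_{\alpha_i}$ explicitly: it is the identity plus the row of all $1$'s (minus a $1$) in row $i$, i.e.\ $M_{\alpha_i}\mathbf{e}_j = \mathbf{e}_j$ for $j\ne i$ and $M_{\alpha_i}\mathbf{e}_i = \mathbf{e}_i$, while $\tr{M_{\alpha_i}}\mathbf{e}_i = \mathbf{1}$ and $\tr{M_{\alpha_i}}\mathbf{e}_j = \mathbf{e}_j$ for $j\ne i$. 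The hyperplane appearing in \eqref{e:Mtilde} is $(M_{[0,n)})^{-1}\mathbf{1}^\bot$; dualizing, the natural object to track is $\mathbf{v}^{(n)} = \tr{(M_{[0,n)})}\mathbf{1}/\|\tr{(M_{[0,n)})}\mathbf{1}\|_1$, which is exactly the normalized normal vector to that hyperplane pulled back appropriately. So the plan is: define $\tilde M_n$ to equal $M_n$ but with the $i_n$-th column replaced by a vector chosen so that $\tilde M_n$ still acts as $M_n$ on $\mathbf{v}^{(n+1)\bot}$ (equivalently, $\tr{\tilde M_n}$ fixes $\mathbf{e}_j$ for $j\ne i_n$ and sends $\mathbf{e}_{i_n}$ into the hyperplane-compatible direction), and then verify the norm bound by a direct computation.

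**Key steps.**
First I would make precise the duality: condition \eqref{e:Mtilde} says $\tilde M_{[0,n)}$ and $M_{[0,n)}$ agree on $(M_{[0,n)})^{-1}\mathbf{1}^\bot$, which is implied inductively if each $\tilde M_n$ agrees with $M_n$ on $\tr{(M_{[0,n)})^{-1}}\mathbf{1}^\bot$ — but it is cleaner to work with the transposes, where one wants $\tr{\tilde M_n}$ and $\tr{M_n}$ to agree on the line spanned by $\mathbf{v}^{(n)}$ viewed inside the appropriate space, so that products telescope. Concretely, set $\tr{\tilde M_n}\,\mathbf{e}_j = \mathbf{e}_j$ for $j\ne i_n$ (forced, since $\tr{M_n}$ already does this), and set $\tr{\tilde M_n}\,\mathbf{e}_{i_n} = \mathbf{1} - c\,\mathbf{w}$ for a suitable vector $\mathbf{w}$ and scalar $c$, chosen so that $\tr{\tilde M_n}$ still has the same action on $\mathbf{v}^{(n)}$ that $\tr{M_n}$ does relative to $\mathbf{v}^{(n+1)}$. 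The scaling relation $\mathbf{v}^{(n)} \parallel \tr{M_n}\,\mathbf{v}^{(n+1)}$ (from the definition of $\mathbf{v}^{(n)}$ and multiplicativity of $M_{[0,n)}$) pins down $\mathbf{w}$ and $c$. Computing $\|\tr{\tilde M_n}\,\mathbf{e}_{i_n}\|_1$ then reduces to: $\|\mathbf{1}\|_1 = d$ minus a correction that equals $\|\mathbf{v}^{(n+1)}\|_\infty^{-1}$; here the point is that $\|\mathbf{v}^{(n+1)}\|_\infty = v^{(n+1)}_{i_n}$, because Lemma~\ref{l:AR1} (its proof) shows the $i_n$-th coordinate is the largest after applying $\tr{M_n}$ — and the normalization constant $\|\tr{(M_{[0,n)})}\mathbf{1}\|_1 / \|\tr{(M_{[0,n+1)})}\mathbf{1}\|_1$ is precisely $v^{(n+1)}_{i_n}$. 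The strict inequality $d - \|\mathbf{v}^{(n+1)}\|_\infty^{-1} < 1$ is equivalent to $\|\mathbf{v}^{(n+1)}\|_\infty < 1/(d-1)$, which is exactly the first assertion of Lemma~\ref{l:AR1}.

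**Main obstacle.**
The routine part is the explicit linear algebra with $M_{\alpha_i}$; the delicate point is getting the agreement condition \eqref{e:Mtilde} to propagate correctly through the product $M_{[0,n)}$ rather than just holding for a single $n$. One must check that the local modification (replacing one column of $M_n$) is compatible with the global hyperplane $(M_{[0,n)})^{-1}\mathbf{1}^\bot$ appearing in \eqref{e:Mtilde}, i.e.\ that $\tilde M_{[0,n+1)} = \tilde M_{[0,n)}\tilde M_n$ still agrees with $M_{[0,n+1)}$ on $(M_{[0,n+1)})^{-1}\mathbf{1}^\bot$ — this needs the inductive hypothesis together with the fact that $\tr{M_n}\,\mathbf{1}^\bot$-type directions are preserved. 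I expect this bookkeeping, plus correctly identifying which coordinate realizes $\|\mathbf{v}^{(n+1)}\|_\infty$, to be where care is required; once those are settled, the norm identity $\|\tr{\tilde M_n}\,\mathbf{e}_{i_n}\|_1 = d - \|\mathbf{v}^{(n+1)}\|_\infty^{-1}$ and the bound $< 1$ fall out immediately from Lemma~\ref{l:AR1}.
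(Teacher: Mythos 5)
Your overall architecture does match the paper's: only the $i_n$-th row of $M_n$ is modified (your parenthetical in terms of $\tr{\tilde{M}_n}$ is the right reading --- note it is the row, not the column, of $M_n$ that changes, and also $M_{\alpha_i}\mathbf{e}_j=\mathbf{e}_i+\mathbf{e}_j$ for $j\ne i$, not $\mathbf{e}_j$); the modification is forced to be by a multiple of $\tr{\mathbf{v}}^{(n+1)}$ so that $\tilde{M}_n$ agrees with $M_n$ on $(\mathbf{v}^{(n+1)})^\bot$; and \eqref{e:Mtilde} then propagates inductively via $M_n\,(\mathbf{v}^{(n+1)})^\bot=(\mathbf{v}^{(n)})^\bot$. (Beware that the correct scaling relation is $\mathbf{v}^{(n+1)}\parallel\tr{M_n}\,\mathbf{v}^{(n)}$, not $\mathbf{v}^{(n)}\parallel\tr{M_n}\,\mathbf{v}^{(n+1)}$ as you wrote; the version you state would identify the wrong hyperplanes and break the telescoping.)

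The genuine gap is in the central norm computation. You assert that $\|\mathbf{v}^{(n+1)}\|_\infty=v^{(n+1)}_{i_n}$ ``because the $i_n$-th coordinate is the largest after applying $\tr{M_n}$'' and that $\|\tr{(M_{[0,n)})}\,\mathbf{1}\|_1/\|\tr{(M_{[0,n+1)})}\,\mathbf{1}\|_1=v^{(n+1)}_{i_n}$. Both claims are false, and the first is backwards: since the $j$-th column of $M_n$ is $\mathbf{e}_{i_n}+\mathbf{e}_j$ for $j\ne i_n$ and $\mathbf{e}_{i_n}$ for $j=i_n$, one gets $(\tr{M_n}\,\mathbf{v}^{(n)})_j=v^{(n)}_{i_n}+v^{(n)}_j\ge v^{(n)}_{i_n}=(\tr{M_n}\,\mathbf{v}^{(n)})_{i_n}$, so $v^{(n+1)}_{i_n}$ is the \emph{smallest} coordinate of $\mathbf{v}^{(n+1)}$ (this is precisely why the proof of Lemma~\ref{l:AR1} obtains $v^{(n+1)}_{i_n}<\frac{1}{2(d-1)}$). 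If you accordingly chose $c=1/v^{(n+1)}_{i_n}$, the modified row $\tr{\mathbf{1}}-c\,\tr{\mathbf{v}}^{(n+1)}$ would have negative entries and $\ell^1$-norm $1/v^{(n+1)}_{i_n}-d$, which is large rather than $<1$, so the stated identity and bound would fail. The point you flag as delicate --- identifying the argmax of $\mathbf{v}^{(n+1)}$ --- is in fact not needed at all: take the $i_n$-th row of $\tilde{M}_n$ to be $\tr{\mathbf{1}}-\tr{\mathbf{v}}^{(n+1)}/\|\mathbf{v}^{(n+1)}\|_\infty$ (maximum over \emph{all} coordinates, generically not the $i_n$-th); every entry of $\tr{\tilde{M}_n}\,\mathbf{e}_{i_n}$ is then nonnegative, so $\|\tr{\tilde{M}_n}\,\mathbf{e}_{i_n}\|_1=d-\|\mathbf{v}^{(n+1)}\|_1/\|\mathbf{v}^{(n+1)}\|_\infty=d-\|\mathbf{v}^{(n+1)}\|_\infty^{-1}$ because $\|\mathbf{v}^{(n+1)}\|_1=1$, and this is $<1$ exactly by the bound $\|\mathbf{v}^{(n+1)}\|_\infty<\frac{1}{d-1}$ of Lemma~\ref{l:AR1}, as you correctly observe at the end.
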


\begin{proof}
For each $n \in \mathbb{N}$, let $\tilde{M}_n$ be the matrix with $i_n$-th row $\tr{\mathbf{1}} - \frac{\tr{\mathbf{v}}^{(n+1)}}{\|\mathbf{v}^{(n+1)}\|_\infty}$ and $j$-th row $\tr{\mathbf{e}_j}$ for all $j \in \mathcal{A} \setminus \{i_n\}$. 
Then $\big\|\tr{\!\tilde{M}_n}\, \mathbf{e}_{i_n}\big\|_1 = d - \|\mathbf{v}^{(n+1)}\|_\infty^{-1}$ and $\tr{\!\tilde{M}_n}\, \mathbf{e}_j = \mathbf{e}_j$ for all $j \in \mathcal{A} \setminus \{i_n\}$, with $\|\mathbf{v}^{(n+1)}\|_\infty^{-1} > d-1$ by Lemma~\ref{l:AR1}.
Since adding a multiple of~$\tr{\mathbf{v}}^{(n+1)}$ to a row of~$M_n$ does not change $M_n\, \mathbf{x}$ for $\mathbf{x} \in (\mathbf{v}^{(n+1)})^\bot$, we have $\tilde{M}_n\, \mathbf{x} = M_n\, \mathbf{x}$ for all $\mathbf{x} \in (\mathbf{v}^{(n+1)})^\bot$.
Using that $M_n\, (\mathbf{v}^{(n+1)})^\bot = (\mathbf{v}^{(n)})^\bot$, we obtain inductively that \eqref{e:Mtilde} holds, which proves the lemma. \qed
\end{proof}

\begin{lemma} \label{l:AR3}
Let $\omega$ be an Arnoux-Rauzy  word with directive sequence $(\alpha_{i_n})_{n\in\mathbb{N}}$ and $(\tilde{M}_n)_{n\in\mathbb{N}}$ as in Lemma~\ref{l:AR2}.
If $\{i_k,i_{k+1},\ldots,i_{\ell-1}\} = \mathcal{A}$ and there is $h \in \mathbb{N}$ such that $\{i_{n-h+1},i_{n-h+2},\ldots,i_n\} = \mathcal{A}$ for all $n \in [k,\ell)$, then
\[
\big\|\tilde{M}_{[k,\ell)}\big\|_\infty < \frac{2^h-d}{2^h-1}.
\]
\end{lemma}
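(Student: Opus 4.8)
The idea is to bound $\|\tilde M_{[k,\ell)}\|_\infty$ by tracking how the matrices $\tilde M_n$ act on the standard basis vectors $\mathbf e_j$, using the very explicit description of $\tilde M_n$ from Lemma~\ref{l:AR2}. Recall that $\tr{\!\tilde M_n}\,\mathbf e_{i_n}$ has $1$-norm $d-\|\mathbf v^{(n+1)}\|_\infty^{-1}<1$ and $\tr{\!\tilde M_n}\,\mathbf e_j=\mathbf e_j$ for $j\ne i_n$; equivalently, in terms of $\tilde M_n$ itself (acting on columns), the $i_n$-th \emph{row} of $\tilde M_n$ is $\tr{\mathbf 1}-\tr{\mathbf v}^{(n+1)}/\|\mathbf v^{(n+1)}\|_\infty$, a nonnegative row vector whose entries sum to less than $1$, while every other row of $\tilde M_n$ is $\tr{\mathbf e_j}$. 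So $\tilde M_n$ is a nonnegative matrix whose rows all have $1$-norm $\le 1$, with the $i_n$-th row strictly contracting.

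**Key steps.** First I would set up an invariant: for a product $\tilde M_{[k,n)}$, control the vector $\tr{(\tilde M_{[k,n)})}\,\mathbf 1$, i.e.\ the column-sum vector. Since multiplying on the right by $\tilde M_n$ replaces the $i_n$-th coordinate's contribution, and since each letter of $\mathcal A$ appears in $\{i_k,\ldots,i_{\ell-1}\}$, one sees the column sums shrink as the index $n$ grows: once letter $j$ has been "used" (i.e.\ $i_m=j$ for some $m<n$), the $j$-th column sum of $\tilde M_{[k,n)}$ is $<1$. The quantitative gain comes from Lemma~\ref{l:AR1}: the hypothesis that $\{i_{n-h+1},\ldots,i_n\}=\mathcal A$ for all $n\in[k,\ell)$ forces $\|\mathbf v^{(n+1)}\|_\infty<\frac{2^h-1}{2^h(d-1)}$ (apply Lemma~\ref{l:AR1} with window ending at $n$), hence $\|\mathbf v^{(n+1)}\|_\infty^{-1}>\frac{2^h(d-1)}{2^h-1}$, so the $i_n$-th row of $\tilde M_n$ has $1$-norm $d-\|\mathbf v^{(n+1)}\|_\infty^{-1}<d-\frac{2^h(d-1)}{2^h-1}=\frac{2^h-d}{2^h-1}$. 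Thus \emph{every} row that gets modified during the block $[k,\ell)$ acquires $1$-norm $<\frac{2^h-d}{2^h-1}$. The second step: since $\{i_k,\ldots,i_{\ell-1}\}=\mathcal A$, by the end of the product every row index $j\in\mathcal A$ has been modified at least once, and I would argue that once a row has been "touched", subsequent right-multiplications can only keep its $1$-norm below $\frac{2^h-d}{2^h-1}$ (because right-multiplication by the nonnegative $\tilde M_m$ with all column... — more precisely, the $j$-th row of $\tilde M_{[k,m+1)}$ is a nonnegative combination, with coefficients summing to $\le$ the old row sum, of the rows of $\tilde M_m$, each of which has $1$-norm $\le 1$, so row sums are nonincreasing in $m$). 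Combining, $\|\tilde M_{[k,\ell)}\|_\infty=\max_j\|(\text{row }j)\|_1<\frac{2^h-d}{2^h-1}$.

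**Main obstacle.** The delicate point is the monotonicity bookkeeping: I must make sure that when a row of the partial product $\tilde M_{[k,m)}$ already has small $1$-norm, a later factor $\tilde M_m$ does not blow it back up. This is where I need the inequality $\|\tilde M_m\|_\infty\le 1$ (each row of $\tilde M_m$ has $1$-norm at most $1$) to conclude $\|\tilde M_{[k,m+1)}\|_\infty\le\|\tilde M_{[k,m)}\|_\infty$ in general — but the \emph{strict} improvement below $\frac{2^h-d}{2^h-1}$ only happens at the step where row $j$ first equals $i_m$, so I need to check that for \emph{each} $j$ there is such a step inside $[k,\ell)$ and that after it the bound persists. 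A clean way is: fix $j$, let $m\in[k,\ell)$ be the \emph{last} index with $i_m=j$ (exists since $\{i_k,\ldots,i_{\ell-1}\}=\mathcal A$); then the $j$-th row of $\tilde M_{[k,m+1)}$ is exactly the nonnegative combination of the (small, $1$-norm $<\frac{2^h-d}{2^h-1}$) $i_m$-th row of $\tilde M_m$ with coefficient equal to the $i_m$-th entry of the $j$-th row of $\tilde M_{[k,m)}$ — and here one must handle the subtlety that this coefficient could exceed $1$ a priori; it does not, because the $j$-th row of $\tilde M_{[k,m)}$, being a product of matrices each with nonnegative rows summing to $\le 1$ starting from $\mathbf e_j$, has all entries $\le 1$ and in fact row sum $\le 1$. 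Finally, factors $\tilde M_{m'}$ with $m'>m$ do not touch row $j$ at all (they only modify row $i_{m'}\ne j$), so the $j$-th row of $\tilde M_{[k,\ell)}$ equals a nonnegative-coefficient image (coefficients summing to $\le$ its current row sum) of rows of those $\tilde M_{m'}$, keeping the $1$-norm below $\frac{2^h-d}{2^h-1}$. Taking the max over $j$ gives the claim.
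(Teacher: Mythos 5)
Your overall strategy is the paper's own (the paper phrases it as bounding $\|\tr{(\tilde M_{[k,\ell)})}\,\mathbf e_j\|_1$ for each $j$, which is exactly your row-by-row bookkeeping), and your two main ingredients are sound: by Lemma~\ref{l:AR1} applied to the window ending at $n$, the $i_n$-th row of $\tilde M_n$ is nonnegative with $1$-norm $d-\|\mathbf v^{(n+1)}\|_\infty^{-1}<\frac{2^h-d}{2^h-1}$, all other rows are $\tr{\mathbf e_i}$, and right-multiplication by such a substochastic nonnegative matrix cannot increase the $1$-norm of any row of a partial product. However, the decisive step in your ``clean way'' is flawed. Taking $m$ to be the \emph{last} index in $[k,\ell)$ with $i_m=j$, you claim that the $j$-th row of $\tilde M_{[k,m+1)}$ is \emph{exactly} a scalar multiple of the $i_m$-th row of $\tilde M_m$. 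That identity is false whenever $j$ occurs more than once in the window: writing $r$ for the $j$-th row of $\tilde M_{[k,m)}$, the $j$-th row of $\tilde M_{[k,m+1)}$ is $r_{j}\,(\text{row } i_m \text{ of } \tilde M_m)+\sum_{i\ne j} r_i\,\tr{\mathbf e_i}$, and after an earlier occurrence of $j$ the vector $r$ is no longer supported on coordinate $j$. With only your stated control $\|r\|_1\le 1$, this step yields a bound of the form $r_j\frac{2^h-d}{2^h-1}+(1-r_j)$, which need not be below $\frac{2^h-d}{2^h-1}$. For the same reason, the earlier blanket assertion that ``every row that gets modified during the block acquires $1$-norm $<\frac{2^h-d}{2^h-1}$'' is not justified as stated: a later modification of an already-touched row only fails to increase its norm; the strict contraction is guaranteed only at the first modification, when the row is still a standard basis vector.

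The repair is immediate and turns your argument into the paper's proof: choose $m\in[k,\ell)$ \emph{minimal} with $i_m=j$. Then for $k\le n<m$ one has $i_n\ne j$, so the $j$-th row of $\tilde M_{[k,m)}$ is exactly $\tr{\mathbf e_j}$, hence the $j$-th row of $\tilde M_{[k,m+1)}$ is exactly the $i_m$-th row of $\tilde M_m$, of $1$-norm $<\frac{2^h-d}{2^h-1}$ by Lemmas~\ref{l:AR1} and~\ref{l:AR2}; your monotonicity observation (equivalently, $\|\tr{\!\tilde M_{m'}}\|_1\le 1$ for $m'>m$) then keeps this row's $1$-norm below $\frac{2^h-d}{2^h-1}$ up to $\ell$, and taking the maximum over $j$ gives $\|\tilde M_{[k,\ell)}\|_\infty<\frac{2^h-d}{2^h-1}$. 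With that one change (first occurrence instead of last), your proof is correct and coincides with the one in the paper.
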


\begin{proof}
Let $j \in \mathcal{A}$ and let $m \in [k,\ell)$ be minimal such that $i_m = j$.
Then
\[
\big\|\tr{(\tilde{M}_{[k,\ell)})}\, \mathbf{e}_j\big\|_1 \le \big\|\tr{(\tilde{M}_{[m+1,\ell)})}\big\|_1\, \big\|\tr{(\tilde{M}_{[k,m+1)})}\, \mathbf{e}_j\big\|_1 \le \big\|\tr{\!\tilde{M}_m}\, \mathbf{e}_j\big\|_1 < \frac{2^h-d}{2^h-1},
\]
where we have used that, for all $n \in [k,\ell)$ by Lemmas~\ref{l:AR1} and~\ref{l:AR2}, $\|\tr{\!\tilde{M}_n}\| \le 1$, $\tr{\!\tilde{M}_n}\, \mathbf{e}_j = \mathbf{e}_j$ for all $j \in \mathcal{A} \setminus \{i_n\}$, and $\|\tr{\!\tilde{M}_n}\, \mathbf{e}_{i_n}\|_1 < \frac{2^h-d}{2^h-1}$.
This shows that $\|\tilde{M}_{[k,\ell)}\|_\infty = \|\tr{(\tilde{M}_{[k,\ell)})}\|_1 < \frac{2^h-d}{2^h-1}$. \qed
\end{proof}

\begin{theorem} \label{t:balanced}
Let $h \in \mathbb{N}$. 
There is a constant $C(h)$ such that each Arnoux-Rauzy word with strong partial quotients bounded by~$h$, i.e., with directive sequence $(\alpha_{i_n})_{n\in\mathbb{N}}$ satisfying $\{i_n,\ldots,i_{n+h-1}\} = \mathcal{A}$ for all $n \in \mathbb{N}$, is $C(h)$-balanced.
\end{theorem}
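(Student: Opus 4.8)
The plan is to bound the discrepancy $\Delta(\omega)$ and then to invoke Lemma~\ref{l:bounded}, which gives $B(\omega)\le 4\,\Delta(\omega)$. To estimate $\Delta(\omega)$ we use the second statement of Proposition~\ref{p:bounded3}: if $(\tilde M_n)_{n\in\mathbb N}$ is the sequence attached to the directive sequence $(\alpha_{i_n})_{n\in\mathbb N}$ by Lemma~\ref{l:AR2}, then it satisfies~\eqref{e:Mtilde}, hence
\[
\Delta(\omega)\ \le\ \sum_{n=0}^\infty \big\|\tilde M_{[0,n)}\big\|_\infty .
\]
Recall from Lemma~\ref{l:AR2} that each $\|\tilde M_n\|_\infty\le 1$, so by submultiplicativity of the operator norm $\|\tilde M_{[k,\ell)}\|_\infty\le 1$ for every $k\le\ell$. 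Note also that the hypothesis ``strong partial quotients bounded by $h$'' says precisely that every window $\{i_n,i_{n+1},\ldots,i_{n+h-1}\}$ of length $h$ equals $\mathcal A$; in particular $d\le h$.

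The key step is to cut $\mathbb N$ into the consecutive blocks $[0,h),[h,2h),[2h,3h),\dots$ and to show that every block other than the first one contracts. Fix $k\ge 1$. The block $[kh,(k+1)h)$ meets both hypotheses of Lemma~\ref{l:AR3}: on the one hand $\{i_{kh},\ldots,i_{(k+1)h-1}\}=\mathcal A$ because this is a length-$h$ window; on the other hand, for every $n\in[kh,(k+1)h)$ we have $n-h+1\ge (k-1)h+1\ge 0$, so $\{i_{n-h+1},\ldots,i_n\}$ is again a length-$h$ window and hence equals $\mathcal A$. Lemma~\ref{l:AR3} therefore yields
\[
\big\|\tilde M_{[kh,(k+1)h)}\big\|_\infty\ <\ \rho\ :=\ \frac{2^h-d}{2^h-1}\qquad(k\ge 1),
\]
and $\rho\in(0,1)$ since $h\ge d$ forces $2^h\ge 2^d>d$, while $d\ge 2$ gives $2^h-d<2^h-1$. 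The first block $[0,h)$ is genuinely exceptional: the second hypothesis of Lemma~\ref{l:AR3} would require the negative indices $n-h+1<0$, so there we only use the trivial bound $\|\tilde M_{[0,h)}\|_\infty\le 1$.

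It remains to assemble the estimate. For an arbitrary $n$, put $q=\lfloor n/h\rfloor$ and split $\tilde M_{[0,n)}=\tilde M_{[0,h)}\,\tilde M_{[h,2h)}\cdots\tilde M_{[(q-1)h,qh)}\,\tilde M_{[qh,n)}$; submultiplicativity together with the block contraction and $\|\tilde M_{[0,h)}\|_\infty\le1$, $\|\tilde M_{[qh,n)}\|_\infty\le1$ gives $\|\tilde M_{[0,n)}\|_\infty\le \rho^{\max(0,\,q-1)}$. Since exactly $h$ integers $n$ have a given value of $q$, summing the geometric series and using $1-\rho=\tfrac{d-1}{2^h-1}$ and $d\ge2$ yields
\[
\Delta(\omega)\ \le\ h\sum_{q=0}^\infty \rho^{\max(0,\,q-1)}\ =\ h\Big(1+\tfrac{1}{1-\rho}\Big)\ =\ h\,\tfrac{2^h+d-2}{d-1}\ \le\ h\,2^h .
\]
By Lemma~\ref{l:bounded}, $B(\omega)\le 4\,\Delta(\omega)\le 4h\,2^h$, so the theorem holds with $C(h)=4h\,2^h$.

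The proof is essentially bookkeeping once Lemmas~\ref{l:AR1}--\ref{l:AR3} are in hand; the only point requiring care is the verification that both hypotheses of Lemma~\ref{l:AR3} are satisfied for the interior blocks, which in turn forces the separate (trivial) treatment of the initial block $[0,h)$, and the observation that the relevant ratio $\rho$ is $<1$ precisely because $d\ge2$.
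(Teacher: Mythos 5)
Your proposal is correct and follows essentially the same route as the paper's proof: use the matrices from Lemma~\ref{l:AR2}, apply the block contraction of Lemma~\ref{l:AR3} to consecutive length-$h$ windows (with the initial block handled trivially), sum the resulting geometric series, and conclude via Proposition~\ref{p:bounded3} and Lemma~\ref{l:bounded}. The only difference is that you carry out the bookkeeping explicitly and extract the concrete constant $C(h)=4h\,2^h$, where the paper is content with a $\mathcal{O}\bigl(\bigl(\tfrac{2^h-d}{2^h-1}\bigr)^{n/h}\bigr)$ estimate.
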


\begin{proof}
By Lemma~\ref{l:AR3}, there is a sequence $(\tilde{M}_n)_{n\in\mathbb{N}}$ satisfying~\eqref{e:Mtilde} such that $\|\tilde{M}_{[n,n+h)}\|_\infty < \frac{2^h-d}{2^h-1}$ for all $n \ge h-1$, thus $\|\tilde{M}_{[0,n)}\|_\infty = \mathcal{O}\big(\big(\frac{2^h-d}{2^h-1}\big)^{n/h}\big)$, hence $\sum_{n=0}^\infty \|\tilde{M}_{[0,n)}\|_\infty$ is bounded. 
Lemma~\ref{l:bounded} and Proposition~\ref{p:bounded3} conclude the proof. \qed
\end{proof}

\section{Contractivity of $3$-dimensional Brun matrices} \label{sec:contr-brun-matr}
For Brun words (over $3$~letters), we follow a similar strategy as for Arnoux-Rauzy words. 
For a Brun word~$\omega$ with directive sequence $(\psi_{i_n,j_n})_{n\in\mathbb{N}}$, let $(k_n)_{n\in\mathbb{N}}$ be the sequence of letters defined by $\{i_n,j_n,k_n\} = \mathcal{A}$, and let
\[
\mathbf{f}^{(n)} = \tr{\big(f^{(n)}_1,f^{(n)}_2,\ldots,f^{(n)}_d\big)} = \frac{(M_{[0,n)})^{-1}\mathbf{f}}{\|(M_{[0,n)})^{-1}\mathbf{f}\|_1}
\]
be the frequency vector of $\omega^{(n)}$.
Moreover, let $(F_n)_{n\in\mathbb{N}}$ be the sequence of Fibonacci numbers defined by $F_0 = 1$, $F_1 = 2$, $F_n = F_{n-1} + F_{n-2}$ for all $n \ge 2$.

\begin{lemma}
Let $\omega$ be a Brun word over $3$ letters with directive sequence $(\beta_{i_n,j_n})_{n\in\mathbb{N}}$.
Then $f_{i_n}^{(n-h)} \ge \frac{1}{F_{h+1}+1}$ for all $h \le n$. 
If $\{i_n,i_{n+1},\ldots,i_{n+h-1}\} = \mathcal{A}$ for all $n \in \mathbb{N}$, then we have $(f_{j_n}^{(n)}-f_{k_n}^{(n)})/f_{i_n}^{(n)} \ge \frac{1}{F_h}$ for all $n \in \mathbb{N}$. 
\end{lemma}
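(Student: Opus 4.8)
The two claims are essentially statements about how the inverse Brun matrices act on the frequency vector, so the plan is to set up a recursion on the components of $\mathbf{f}^{(n)}$ and track it through the two admissible types of blocks in~\eqref{eq:Brun:admis}. First I would write down explicitly $M_{\beta_{ij}}^{-1}$ and its action: since $\beta_{ij}$ fixes every letter $k\neq j$ and sends $j\mapsto ij$, the incidence matrix adds the $j$-th coordinate to the $i$-th coordinate (in the transpose sense), so $M_{\beta_{ij}}^{-1}$ subtracts: it sends $\mathbf f=\tr{(f_1,\dots,f_d)}$ to the vector with $f_i$ replaced by $f_i-f_j$ and all other coordinates unchanged (up to the $\ell^1$-normalization absorbed into $\|\cdot\|_1$). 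Thus passing from $\mathbf f^{(n)}$ to $\mathbf f^{(n+1)}$ means: locate the pair $(i_n,j_n)$, replace the $i_n$-coordinate by its difference with the $j_n$-coordinate, renormalize. Because $\mathbf f^{(n)}\in D_{i_nj_n}$ (the domain condition in the Brun representation), we always have $f^{(n)}_{i_n}\ge f^{(n)}_{j_n}\ge f^{(n)}_{k_n}$, so the subtraction keeps everything nonnegative — this is the backbone of both estimates.

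For the first inequality $f^{(n-h)}_{i_n}\ge \tfrac{1}{F_{h+1}+1}$, I would prove it by induction on $h$, reading the directive sequence backwards from step $n$. The base case $h=0$ is the trivial bound $f^{(n)}_{i_n}\le 1$ inverted appropriately (or one checks $h=0,1$ by hand, matching $F_1=2$, $F_2=3$, i.e. bounds $\tfrac13,\tfrac14$). For the inductive step one looks at the substitution $\sigma_{n-h-1}=\beta_{i_{n-h-1}j_{n-h-1}}$ that produces $\mathbf f^{(n-h)}$ from $\mathbf f^{(n-h-1)}$, and uses the admissibility condition~\eqref{eq:Brun:admis}: either $\sigma_{n-h-1}\sigma_{n-h}$ is of the form $\beta_{ij}\beta_{ij}$ (same pair repeated) or of the form $\beta_{ij}\beta_{jk}$. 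In each case the coordinate $i_n$ of $\mathbf f^{(n-h)}$ is, before normalization, a sum of two of the coordinates of $\mathbf f^{(n-h-1)}$ that themselves are bounded below by $\tfrac{1}{F_h+1}$ and $\tfrac{1}{F_{h-1}+1}$ by the induction hypothesis for the two shorter blocks, and the normalization constant $\|(M_{n-h-1})^{-1}\mathbf f^{(n-h-1)}\|_1$ is at most $1$; adding these and using $F_h+F_{h-1}=F_{h+1}$ together with the ``$+1$'' bookkeeping yields $\tfrac{1}{F_{h+1}+1}$. The Fibonacci recursion $F_h=F_{h-1}+F_{h-2}$ is exactly what the two-step admissibility window feeds in, which is why Fibonacci numbers appear.

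For the second inequality, I would track the quantity $g_n=(f^{(n)}_{j_n}-f^{(n)}_{k_n})/f^{(n)}_{i_n}$, the normalized gap between the two smaller coordinates relative to the largest. Applying $M_{\beta_{i_nj_n}}^{-1}$ sends $f_{i_n}\mapsto f_{i_n}-f_{j_n}$ and leaves $f_{j_n},f_{k_n}$ alone, so one gets a rational recursion for $g_{n+1}$ in terms of $g_n$ and the ratios $f^{(n)}_{j_n}/f^{(n)}_{i_n}$; the hypothesis $\{i_n,\dots,i_{n+h-1}\}=\mathcal A$ says that within any window of length $h$ the role of ``largest coordinate'' cycles through all $d=3$ letters, which forces the gap not to collapse. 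I expect the cleanest route is again a backward induction over a window of length $h$: show that if all three letters occur as some $i_m$ in $[n,n+h)$ then the coordinates of $\mathbf f^{(n)}$ cannot be too close to each other, quantified by $g_n\ge 1/F_h$, using the same Fibonacci bookkeeping as in part one (now in the ``forward'' subtraction direction). The main obstacle I anticipate is the case analysis forced by~\eqref{eq:Brun:admis}: one must verify the Fibonacci bound survives both the $\beta_{ij}\beta_{ij}$ step and the $\beta_{ij}\beta_{jk}$ step, and in the second part one must also carefully handle normalization, since $g_n$ is a ratio and the renormalizing scalars cancel only if one is disciplined about writing everything in terms of unnormalized iterates $(M_{[0,n)})^{-1}\mathbf f$ and only normalizing at the end.
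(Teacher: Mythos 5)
Your setup (the action of $M_{\beta_{ij}}^{-1}$, the domain condition $f^{(n)}_{i_n}\ge f^{(n)}_{j_n}\ge f^{(n)}_{k_n}$) is correct, but the one place where your plan becomes quantitative --- the inductive step for $f^{(n-h)}_{i_n}\ge\frac1{F_{h+1}+1}$ --- does not work as stated. Going backwards means $\mathbf{f}^{(n-h)}\propto M_{n-h}\,\mathbf{f}^{(n-h+1)}$, and $M_{\beta_{ij}}$ adds the $j$-th coordinate to the $i$-th; so it is always a coordinate of the \emph{earlier} vector that is a sum of two coordinates of the later one, never the reverse, and the $i_n$-coordinate is only ever left unchanged or increased before normalization. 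Hence the danger is not in the numerator but in the denominator: each backward step multiplies the $\ell^1$-mass by $1+f_{j_m}^{(m+1)}\ge 1$, so the relevant normalization constant is at least $1$; your remark that $\|(M_{n-h-1})^{-1}\mathbf{f}^{(n-h-1)}\|_1\le 1$ concerns the forward (subtractive) direction and does not help a lower bound on $f^{(n-h)}_{i_n}$. Moreover the arithmetic ``two coordinates bounded below by $\frac1{F_h+1}$ and $\frac1{F_{h-1}+1}$, add and use $F_h+F_{h-1}=F_{h+1}$ to get $\frac1{F_{h+1}+1}$'' is incoherent: the sum of those reciprocals is far larger than $\frac1{F_{h+1}+1}$, and adding lower bounds cannot produce the loss the lemma quantifies. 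What actually produces $F_{h+1}+1$ is an upper bound on how much total mass can accumulate relative to the (untouched) $i_n$-coordinate over $h$ backward steps; the extremal situation is $\mathbf{f}^{(n)}=\bigl(\frac13,\frac13,\frac13\bigr)$ with $i_{n-h},\dots,i_{n-1}$ alternating between $j_n$ and $k_n$, so that the two other coordinates feed each other and reach $F_h$ and $F_{h-1}$ in units of $f_{i_n}$, giving mass $1+F_{h-1}+F_h=1+F_{h+1}$. A single-coordinate induction of the kind you propose cannot close; you must either justify this worst case or track both of the other coordinates (equivalently, the mass ratio) simultaneously. Note also that the Fibonacci recursion comes from this alternation of the largest letter between two symbols, not from the length-two admissibility window~\eqref{eq:Brun:admis} as such.

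For the second inequality your text is a statement of intent rather than an argument: you introduce $g_n=(f^{(n)}_{j_n}-f^{(n)}_{k_n})/f^{(n)}_{i_n}$, predict a recursion and ``the same Fibonacci bookkeeping'', and explicitly defer the case analysis that you yourself identify as the main obstacle --- but that case analysis is the substance of the proof. The paper's argument first uses the window hypothesis together with the first inequality applied at future times to confine $\mathbf{f}^{(n+1)}$, and hence $\mathbf{f}^{(n)}$, to explicit quadrangles (giving, for $i_n=1$, $j_n=2$, the gap bound $f^{(n)}_1-f^{(n)}_2\ge\frac1{2F_h}$), and then examines the possible recent pasts (the last index at which the third letter occurred as $i_m$, possibly preceded by an alternating run) to turn this into the ratio bound $\ge\frac1{F_h}$, checking the remaining cases. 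Nothing in your proposal carries out or replaces these steps, so both halves of the lemma are left with genuine gaps.
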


\begin{proof}
Since $f_{i_n}^{(n)} \ge f_j^{(n)}$ for all $j \in \mathcal{A}$, we have $f_{i_n}^{(n)} \ge 1/3$, and it is easily checked that the minimum for $f_{i_n}^{(n-h)}$ is attained when $\mathbf{f}^{(n)} = (1/3,1/3,1/3)$ and $i_{n-h} \cdots i_{n-1}$ is an alternating sequence of $j_n$ and~$k_n$. 
In this case, we have $f^{(n-h)}_{i_n} = 1$, $f^{(n-h)}_{i_{n-h}} = F_h$, and $f^{(n-h)}_{i_{n-h+1}} = F_{h-1}$, thus $f^{(n-h)}_{i_n} = \frac{1}{F_{h+1}+1}$.

Let now, w.l.o.g.\ $i_n = 1$, $j_n=2$, and assume that $\{1,3\} \subset \{i_{n+2},\ldots,i_{n+h}\}$.
Then $\mathbf{f}^{(n+1)}$ lies in the quadrangle with corners $\big(\frac{1}{3}, \frac{1}{3}, \frac{1}{3}\big)$, $\big(\frac{F_h}{2(F_h+1)}, \frac{F_h}{2(F_h+1)}, \frac{1}{F_h+1}\big)$, $\big(\frac{1}{F_h+1}, \frac{F_h-1}{F_h+1}, \frac{1}{F_h+1}\big)$, and $\big(\frac{1}{F_h+1}, \frac{F_h}{2(F_h+1)}, \frac{F_h}{2(F_h+1)}\big)$.
Therefore, $\mathbf{f}^{(n)}$ lies in the quadrangle with corners $\big(\frac{1}{2}, \frac{1}{4}, \frac{1}{4}\big)$, $\big(\frac{2F_h}{3F_h+2}, \frac{F_h}{3F_h+2}, \frac{2}{3F_h+2}\big)$, $\big(\frac{1}{2}, \frac{F_h-1}{2F_h}, \frac{1}{2F_h}\big)$, and \linebreak $\big(\frac{F_h+2}{3F_h+2}, \frac{F_h}{3F_h+2}, \frac{F_h}{3F_h+2}\big)$.
In particular, note that $f^{(n)}_1 - f^{(n)}_2 \ge \frac{1}{2F_h}$. 

Assume now that $i_{n-1} = 3$. (The situation is similar if $i_{n-1}, \ldots, i_{n-\ell+1}$ are alternatingly $2$ and~$1$, and $i_{n-\ell} = 3$.) 
Then $(f^{(n-1)}_1 - f^{(n-1)}_2)/f^{(n-1)}_1$ is minimal when $\mathbf{f}^{(n)} = \big(\frac{1}{2}, \frac{F_h-1}{2F_h}, \frac{1}{2F_h}\big)$, which implies that $\mathbf{f}^{(n-1)} = \big(\frac{1}{3}, \frac{F_h-1}{3F_h}, \frac{F_h+1}{3F_h}\big)$, thus $(f^{(n-1)}_1 - f^{(n-1)}_2)/f^{(n-1)}_1 \ge \frac{1}{F_h}$. 
A study of several cases shows that this is a lower bound for $(f_{j_n}^{(n)}-f_{k_n}^{(n)})/f_{i_n}^{(n)}$ when $\{i_n,i_{n+1},\ldots,i_{n+h-1}\} = \mathcal{A}$ for all $n \in \mathbb{N}$. \qed
\end{proof}

\begin{lemma} \label{l:Br2}
Let $\omega$ be a Brun  word with directive sequence $(\beta_{i_n,j_n})_{n\in\mathbb{N}}$.
Then there is a sequence of matrices $(\tilde{M}_n)_{n\in\mathbb{N}}$ satisfying~\eqref{e:tMtilde} with
\[
\big\|\tilde{M}_n\, \mathbf{e}_{i_n}\big\|_1 = 1 - \frac{f_{j_n}^{(n)}-f_{k_n}^{(n)}}{f_{i_n}^{(n)}}  \le 1 
\]
and $\tilde{M}_n\, \mathbf{e}_j = \mathbf{e}_j$ for all $j \in \mathcal{A} \setminus \{i_n\}$, $n \in \mathbb{N}$.
\end{lemma}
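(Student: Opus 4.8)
\emph{Sketch of proof.} The plan is to imitate the construction at the end of the proof of Lemma~\ref{l:AR2}, but to work throughout with transposes, since condition~\eqref{e:tMtilde} constrains $\tr{(\tilde M_{[0,n)})}$ rather than $\tilde M_{[0,n)}$. Recall that $M_n=M_{\beta_{i_n,j_n}}$ is the identity matrix except in its $j_n$-th column, which equals $\mathbf{e}_{i_n}+\mathbf{e}_{j_n}$. I would define $\tilde M_n$ by subtracting the vector $\frac{1}{f^{(n)}_{i_n}}\mathbf{f}^{(n)}$ from that same column, i.e.
\[
\tilde M_n \;=\; M_n-\frac{1}{f^{(n)}_{i_n}}\,\mathbf{f}^{(n)}\,\tr{\mathbf{e}_{j_n}} .
\]
This is legitimate because $f^{(n)}_{i_n}$ is the largest coordinate of the probability vector $\mathbf{f}^{(n)}$, hence $f^{(n)}_{i_n}\ge\frac13>0$.

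Next I would read off the column structure and the norm. For $j\ne j_n$ the $j$-th column of $M_n$ is untouched, so $\tilde M_n\,\mathbf{e}_j=\mathbf{e}_j$. The $j_n$-th column $\mathbf{e}_{i_n}+\mathbf{e}_{j_n}-\frac{1}{f^{(n)}_{i_n}}\mathbf{f}^{(n)}$ has coordinate $0$ in position $i_n$, coordinate $1-f^{(n)}_{j_n}/f^{(n)}_{i_n}$ in position $j_n$, and coordinate $-f^{(n)}_{k_n}/f^{(n)}_{i_n}$ in position $k_n$. Since $\mathbf{f}^{(n)}$ is the frequency vector of $\omega^{(n)}$ and therefore lies in $D_{i_n j_n}$, we have $f^{(n)}_{i_n}\ge f^{(n)}_{j_n}\ge f^{(n)}_{k_n}\ge 0$, so the second of these coordinates is $\ge 0$ and the third is $\le 0$; summing absolute values gives
\[
\big\|\tilde M_n\,\mathbf{e}_{j_n}\big\|_1 \;=\; \Bigl(1-\frac{f^{(n)}_{j_n}}{f^{(n)}_{i_n}}\Bigr)+\frac{f^{(n)}_{k_n}}{f^{(n)}_{i_n}} \;=\; 1-\frac{f^{(n)}_{j_n}-f^{(n)}_{k_n}}{f^{(n)}_{i_n}} \;\le\; 1 ,
\]
which is the asserted identity. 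In particular every column of every $\tilde M_k$ has $\ell^1$-norm $\le 1$, whence $\|\tr{(\tilde M_{[0,n)})}\|_\infty=\|\tilde M_{[0,n)}\|_1\le 1$.

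It then remains to verify~\eqref{e:tMtilde}, which I would do by induction on $n$ in the same way as for Lemma~\ref{l:AR2}. Two observations suffice: (i) subtracting a multiple of $\mathbf{f}^{(n)}$ from a column of $M_n$ does not change $\tr{M_n}\mathbf{x}$ for $\mathbf{x}\in(\mathbf{f}^{(n)})^\bot$, so $\tr{\tilde M_n}$ and $\tr{M_n}$ agree on $(\mathbf{f}^{(n)})^\bot$; and (ii) $\tr{(M_{[0,n)})}$ maps $\mathbf{f}^\bot$ into $(\mathbf{f}^{(n)})^\bot$, because $\langle\mathbf{f}^{(n)},\tr{(M_{[0,n)})}\mathbf{x}\rangle=\langle M_{[0,n)}\mathbf{f}^{(n)},\mathbf{x}\rangle$ and $M_{[0,n)}\mathbf{f}^{(n)}$ is a positive scalar multiple of $\mathbf{f}$. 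Assuming~\eqref{e:tMtilde} at level $n$, for $\mathbf{x}\in\mathbf{f}^\bot$ the vector $\tr{(\tilde M_{[0,n)})}\mathbf{x}=\tr{(M_{[0,n)})}\mathbf{x}$ lies in $(\mathbf{f}^{(n)})^\bot$, so
\[
\tr{(\tilde M_{[0,n+1)})}\mathbf{x} = \tr{\tilde M_n}\bigl(\tr{(\tilde M_{[0,n)})}\mathbf{x}\bigr) = \tr{M_n}\bigl(\tr{(M_{[0,n)})}\mathbf{x}\bigr) = \tr{(M_{[0,n+1)})}\mathbf{x},
\]
and the case $n=0$ is trivial.

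I do not expect a serious obstacle: the content is essentially bookkeeping. The two points that need care are keeping the transposes and the order of the matrix products straight — transposition reverses products, which is why the induction is run on $\mathbf{f}^\bot$ through the chain of hyperplanes $(\mathbf{f}^{(n)})^\bot$, rather than through $(\mathbf{v}^{(n)})^\bot$ as for Lemma~\ref{l:AR2} — and getting the signs right in the $\ell^1$-norm of the modified column, which is exactly where the ordering $f^{(n)}_{i_n}\ge f^{(n)}_{j_n}\ge f^{(n)}_{k_n}$ coming from $\mathbf{f}^{(n)}\in D_{i_n j_n}$ enters. The shape of the correction is moreover forced: as in~\cite{Avila-Delecroix} and Lemma~\ref{l:AR2}, compatibility with~\eqref{e:tMtilde} leaves no freedom beyond adding multiples of $\mathbf{f}^{(n)}$ to the columns of $M_n$, and annihilating the ``extra'' entry of the distinguished column is the natural choice that additionally produces a useful contraction factor for the analogue of Lemma~\ref{l:AR3}.
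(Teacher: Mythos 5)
Your proof is correct and essentially identical to the paper's: the paper also defines $\tilde M_n$ by subtracting $\mathbf{f}^{(n)}/f^{(n)}_{i_n}$ from the distinguished column of $M_n$, obtains the same $\ell^1$-norm computation using $f^{(n)}_{i_n}\ge f^{(n)}_{j_n}\ge f^{(n)}_{k_n}$, and verifies \eqref{e:tMtilde} by the same induction through the hyperplanes $(\mathbf{f}^{(n)})^\bot$, using that adding multiples of $\mathbf{f}^{(n)}$ to columns of $M_n$ does not change $\tr{\!M_n}\,\mathbf{x}$ for $\mathbf{x}\in(\mathbf{f}^{(n)})^\bot$. The only discrepancy is notational: you (correctly, given the paper's convention $M_\sigma=(|\sigma(j)|_i)_{i,j}$, under which the exceptional column of $M_{\beta_{i_nj_n}}$ is the $j_n$-th) prove the identity for $\|\tilde M_n\,\mathbf{e}_{j_n}\|_1$ with $\tilde M_n\,\mathbf{e}_j=\mathbf{e}_j$ for $j\ne j_n$, whereas the lemma and the paper's proof write $i_n$ there — but the paper's own arithmetic (a column with entries $1,1,0$ at positions $i_n,j_n,k_n$ before the subtraction) is exactly yours, so this is an $i_n$/$j_n$ index slip in the paper rather than a gap in your argument.
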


\begin{proof}
For each $n \in \mathbb{N}$, let $\tilde{M}_n$ be the matrix built from~$M_n$ by subtracting $\mathbf{f}^{(n)}/f_{i_n}^{(n)}$ from the $i_n$-th column.
Then 
\[
\big\|\tilde{M}_n\, \mathbf{e}_{i_n}\big\|_1 = \bigg(1 - \frac{f_{i_n}^{(n)}}{f_{i_n}^{(n)}}\bigg) + \bigg(1 - \frac{f_{j_n}^{(n)}}{f_{i_n}^{(n)}}\bigg) + \frac{f_{k_n}^{(n)}}{f_{i_n}^{(n)}} = 1 - \frac{f_{j_n}^{(n)}-f_{k_n}^{(n)}}{f_{i_n}^{(n)}},
\]
and $\tr{\!\tilde{M}_n}\, \mathbf{e}_j = \mathbf{e}_j$ for all $j \in \mathcal{A} \setminus \{i_n\}$.
Since adding a multiple of~$\mathbf{f}^{(n)}$ to a column of~$M_n$ does not change $\tr{\!M_n}\, \mathbf{x}$ for $\mathbf{x} \in (\mathbf{f}^{(n)})^\bot$, we have $\tilde{M}_n\, \mathbf{x} = M_n\, \mathbf{x}$ for all $\mathbf{x} \in (\mathbf{f}^{(n)})^\bot$.
Using that $\tr{\!M_n}\, (\mathbf{f}^{(n)})^\bot = (\mathbf{f}^{(n+1)})^\bot$, we obtain inductively that \eqref{e:tMtilde} holds, which proves the lemma. \qed
\end{proof}

\begin{theorem} \label{t:balanced2}
Let $h \in \mathbb{N}$. 
There is a constant $C(h)$ such that each Brun word over $3$~letters with strong partial quotients bounded by~$h$, i.e., with directive sequence $(\beta_{i_n j_n})_{n\in\mathbb{N}}$ satisfying $\{i_n,\ldots,i_{n+h-1}\} = \{1,2,3\}$ for all $n \in \mathbb{N}$, is $C(h)$-balanced.
\end{theorem}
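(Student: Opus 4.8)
The plan is to mirror the proof of Theorem~\ref{t:balanced}, using the Brun estimates of this section in place of the Arnoux--Rauzy ones. First I would apply Lemma~\ref{l:Br2} to fix a sequence $(\tilde{M}_n)_{n\in\mathbb N}$ satisfying~\eqref{e:tMtilde}, with $\tilde{M}_n\,\mathbf e_j=\mathbf e_j$ for all $j\in\mathcal A\setminus\{i_n\}$ and $\|\tilde{M}_n\,\mathbf e_{i_n}\|_1=1-(f_{j_n}^{(n)}-f_{k_n}^{(n)})/f_{i_n}^{(n)}$. Since the hypothesis $\{i_n,\dots,i_{n+h-1}\}=\{1,2,3\}$ holds for every~$n$, the first lemma of this section gives $(f_{j_n}^{(n)}-f_{k_n}^{(n)})/f_{i_n}^{(n)}\ge 1/F_h$, hence $\|\tilde{M}_n\,\mathbf e_{i_n}\|_1\le 1-1/F_h$, for all~$n$. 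As the other columns of $\tilde{M}_n$ are standard basis vectors, each $\tilde{M}_n$ satisfies $\|\tilde{M}_n\|_1\le 1$, where $\|\cdot\|_1$ is the operator norm induced by the $1$-norm on vectors (the maximal absolute column sum); recall $\|\tr{M}\|_\infty=\|M\|_1$ for every square matrix~$M$.

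The key step is the window bound $\|\tilde{M}_{[n,n+h)}\|_1\le 1-1/F_h$ for all~$n$, which I would prove column by column, exactly as in Lemma~\ref{l:AR3}. Fix $n$ and $j\in\{1,2,3\}$; by hypothesis there is $m\in[n,n+h)$ with $i_m=j$, and I take the largest such~$m$. In the product $\tilde{M}_n\tilde{M}_{n+1}\cdots\tilde{M}_{n+h-1}$ applied to~$\mathbf e_j$, the factors $\tilde{M}_{m+1},\dots,\tilde{M}_{n+h-1}$ leave $\mathbf e_j$ fixed (their active letter is $\ne j$), the factor $\tilde{M}_m$ sends $\mathbf e_j=\mathbf e_{i_m}$ to a vector of $1$-norm at most $1-1/F_h$, and the remaining factors $\tilde{M}_n,\dots,\tilde{M}_{m-1}$ do not increase the $1$-norm since each has $\|\cdot\|_1\le 1$. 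Thus $\|\tilde{M}_{[n,n+h)}\,\mathbf e_j\|_1\le 1-1/F_h$ for every~$j$, which is the window bound. Writing $N=qh+r$ with $0\le r<h$ and combining submultiplicativity with $\|\tilde{M}_\ell\|_1\le 1$, I then get $\|\tr{(\tilde{M}_{[0,N)})}\|_\infty=\|\tilde{M}_{[0,N)}\|_1\le(1-1/F_h)^{\lfloor N/h\rfloor}$, whose sum over~$N$ is $h\,F_h$.

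To finish, I would apply the first statement of Proposition~\ref{p:bounded3} (valid because $(\tilde{M}_n)$ satisfies~\eqref{e:tMtilde}) to obtain $\Delta(\omega)\le\sum_{N\ge 0}\|\tr{(\tilde{M}_{[0,N)})}\|_\infty\le h\,F_h$, a bound depending only on~$h$, and then Lemma~\ref{l:bounded} to conclude that $\omega$ is $C(h)$-balanced, for instance with $C(h)=4h\,F_h$.

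The genuine difficulty lies not in this argument --- which, given the two lemmas of the section, is as short and routine as the proof of Theorem~\ref{t:balanced} --- but in the first lemma of the section, where the uniform Fibonacci bound $(f_{j_n}^{(n)}-f_{k_n}^{(n)})/f_{i_n}^{(n)}\ge 1/F_h$ must be extracted from a case analysis of the letter patterns permitted in a window of length~$h$ by the admissibility condition~\eqref{eq:Brun:admis}. Within the present proof the one point to watch is orientation: the non-identity column of $\tilde{M}_m$ is the $i_m$-th one, and this must match the Dumont--Thomas representation underlying Proposition~\ref{p:bounded3}; moreover, should the Fibonacci bound hold only for $n$ past a threshold depending on~$h$, the finitely many initial length-$h$ blocks contribute merely a bounded multiplicative constant, which is absorbed into $C(h)$.
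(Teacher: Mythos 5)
Your proof is correct and follows essentially the same route as the paper's own proof of Theorem~\ref{t:balanced2}: Lemma~\ref{l:Br2} together with the Fibonacci lemma of this section gives a uniform contraction of $\tilde{M}_{[n,n+h)}$ on every length-$h$ window (argued column by column as in Lemma~\ref{l:AR3}, with the correct choice of the \emph{last} occurrence of each letter since no transposes are involved), and then Proposition~\ref{p:bounded3} and Lemma~\ref{l:bounded} conclude. The only divergence is the value of the window bound: the paper asserts $\|\tr{(\tilde{M}_{[n,n+h)})}\|_\infty \le \frac{1}{F_h}$, whereas your estimate gives $1-\frac{1}{F_h}$, which is what Lemma~\ref{l:Br2} combined with the bound $(f_{j_n}^{(n)}-f_{k_n}^{(n)})/f_{i_n}^{(n)}\ge \frac{1}{F_h}$ actually yields; since either bound is strictly less than $1$, the geometric summation and the conclusion are unaffected, only the explicit constant $C(h)$ differs.
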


\begin{proof}
The proof runs along the same lines as that of Theorem~\ref{t:balanced}.
Here, Lemma~\ref{l:Br2} implies that $\big\|\tr{(\tilde{M}_{[n,n+h)})}\big\|_\infty \le \frac{1}{F_h}$ for all $n \in \mathbb{N}$, similarly to Lemma~\ref{l:AR3}, thus $\|\tr{(\tilde{M}_{[0,n)})}\|_\infty = \mathcal{O}(F_h^{-n/h})$. 
Lemma~\ref{l:bounded} and Proposition~\ref{p:bounded3} conclude the proof. \qed
\end{proof}

\section{Balancedness of almost all words}
We use here the results of \cite{Avila-Delecroix} on Lyapunov exponents to prove that for almost all directive sequences for Brun or Arnoux-Rauzy algorithms, the associated infinite words have finite balances.
Here, we define cylinders for both algorithms as follows: given a finite word~$w$, we denote by $[w]$ the set of frequency vectors for which the continued fraction expansion starts by~$w$.

\begin{theorem} \label{t:almostall}
Let $\mu$ be an ergodic invariant probability measure for the Arnoux-Rauzy algorithm (on $d$~letters) such that $\mu([w]) > 0$ for the cylinder corresponding to a word $w_0 w_1 \cdots w_{n-1} \in \mathcal{A}^*$ with $\{w_0,w_1,\ldots,w_{n-1}\} = \mathcal{A}$.
Then, for $\mu$-almost every~$\mathbf{f}$ in the Rauzy gasket, the Arnoux-Rauzy word $\omega_\mathrm{AR}(\mathbf{f})$ is finitely balanced.

Let $\mu$ be an ergodic invariant probability measure for the Brun algorithm on $3$~letters such that $\mu([w]) > 0$ for the cylinder corresponding to a word $w = (i_0,j_0) \cdots (i_{n-1},j_{n-1})$ with $\{j_0,j_1,\ldots,j_{n-1}\} = \{1,2,3\}$.
Then, for $\mu$-almost every~$\mathbf{f}$, the Brun word $\omega_\mathrm{Br}(\mathbf{f})$ is finitely balanced.
\end{theorem}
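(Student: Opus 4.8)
The plan is to combine the quantitative contractivity estimates of Sections~\ref{sec:contr-arno-rauzy} and~\ref{sec:contr-brun-matr} with the Oseledets multiplicative ergodic theorem, as applied to the cocycles studied in~\cite{Avila-Delecroix}. The key point is that Theorems~\ref{t:balanced} and~\ref{t:balanced2} already give finite balancedness when the strong partial quotients are \emph{uniformly} bounded; here the hypothesis is only that a good word~$w$ (one in which every letter, resp.\ every second index, appears) has positive measure, so boundedness fails but a Birkhoff-type recurrence survives. Concretely, for the Arnoux-Rauzy case I would consider the shift on directive sequences equipped with~$\mu$; since $\mu([w])>0$ and $\mu$ is ergodic, by the Birkhoff ergodic theorem for $\mu$-a.e.\ directive sequence $(\alpha_{i_n})_{n\in\mathbb{N}}$ the word~$w$ (and hence an occurrence of all letters) recurs with bounded gaps along a subsequence of positive density—more precisely, the return times to the cylinder $[w]$ have a finite average, so if we let $n_0 < n_1 < n_2 < \cdots$ be the successive times at which $\sigma_{[n_k,n_k+n)}$ realizes the block~$w$, then $n_{k+1}-n_k$ has bounded Cesàro mean.

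\textbf{The main steps.} First, fix such a $\mu$-generic directive sequence and extract the times $(n_k)$ as above; between consecutive such times the block of substitutions contains an occurrence of every letter (resp.\ every index~$j$ for Brun), so Lemma~\ref{l:AR3} (resp.\ the Brun analogue used in the proof of Theorem~\ref{t:balanced2}) applies to each block: $\|\tilde{M}_{[n_k,n_{k+1})}\|_\infty < \frac{2^{h_k}-d}{2^{h_k}-1} < 1$ where $h_k = n_{k+1}-n_k$ (resp.\ $\|\tr{(\tilde{M}_{[n_k,n_{k+1})})}\|_\infty \le 1/F_{h_k} < 1$), for the sequence $(\tilde{M}_n)$ furnished by Lemma~\ref{l:AR2} (resp.\ Lemma~\ref{l:Br2}). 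Second, multiply these bounds telescopically: $\|\tilde{M}_{[0,n_K)}\|_\infty \le \prod_{k=0}^{K-1} \theta(h_k)$ with $\theta(h) = \frac{2^h-d}{2^h-1} < 1$; taking logarithms, $\log \|\tilde{M}_{[0,n_K)}\|_\infty \le \sum_{k=0}^{K-1} \log \theta(h_k)$, and since $\log\theta(h) \le \log\theta(1) = \log\frac{2-d}{1}$\,... —here one must be slightly careful, as $\theta(h)\to 1$, so the per-block contraction degrades when gaps are long; the correct statement is that $\log\theta(h) \le -c\,2^{-h}$ for a constant $c>0$, and one needs $\sum_k 2^{-h_k}$ to grow linearly, which follows from ergodicity because the \emph{gap-one} returns (i.e.\ those $k$ with $h_k$ bounded by some fixed $h^\star$) themselves occur with positive density, by applying Birkhoff to the cylinder of length-$h^\star$ words that contain~$w$. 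Third, deduce that $\|\tilde{M}_{[0,n)}\|_\infty$ decays at least like $e^{-\delta n}$ along the $n_k$ for some $\delta>0$, hence (using $\|\tilde M_n\|_\infty\le 1$ to interpolate between the $n_k$) along all of~$\mathbb{N}$, so $\sum_{n=0}^\infty \|\tilde{M}_{[0,n)}\|_\infty < \infty$. Finally, Proposition~\ref{p:bounded3} gives $\Delta(\omega) < \infty$ and Lemma~\ref{l:bounded} upgrades this to finite balancedness. The Brun case is identical with $\tilde{M}$ replaced by $\tr{\tilde M}$, $\theta(h)$ by $1/F_h$, and $2^{-h}$ by $F_h^{-1}$ in the summability argument.

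\textbf{The hard part.} The delicate point is \emph{not} any single estimate but the passage from "$w$ has positive measure" to "the bad (long-gap) blocks are summable in the relevant sense." A naive application of Birkhoff only says $w$ recurs with \emph{average} gap finite, which is not by itself enough because $\theta(h)\to 1$: a sparse sequence of very long gaps could in principle destroy summability of $\sum_k\log\theta(h_k)$ even with finite mean gap. The fix is to note that for the AR algorithm, \emph{any} block of length $\ge 2d-1$ that does \emph{not} contain all letters forces some letter to be missing, and one can instead track the simpler event "all of $\mathcal{A}$ appears in the next $N$ steps" for $N$ fixed large; by ergodicity this event has positive measure (it contains $[w]$ shifted), hence occurs with positive frequency, giving genuinely \emph{bounded}-gap good blocks along a positive-density subsequence—and bounded gaps give geometric decay exactly as in Theorem~\ref{t:balanced}. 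So the real content is choosing the right recurrence event; once that is done, the summability and the conclusion are routine given the lemmas already proved. One should also record the minor point that $\mu$-almost every frequency vector actually arises as $\mathbf{f}(\omega)$ for a standard word (by the injectivity of the continued fraction coding noted after~\eqref{e:limitcone}), so that "$\mu$-almost every~$\mathbf{f}$" and "$\mu$-almost every directive sequence" are interchangeable.
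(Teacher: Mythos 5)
Your route is genuinely different from the paper's: the paper proves this theorem in two lines by citing \cite{Avila-Delecroix} for the negativity of the \emph{second Lyapunov exponent} of the cocycle $M_{[0,n)}$ under such a measure $\mu$, which gives exponential decay of $\|\tr{(M_{[0,n)})}\,\mathbf{v}\|$ for all $\mathbf{v}\in\mathbf{f}^\bot$ $\mu$-almost surely, and then applies Proposition~\ref{p:bounded3} (note that $\mathbf{e}_i-f_i\mathbf{1}\in\mathbf{f}^\bot$, so no $\tilde{M}_n$'s are needed). You mention Oseledets at the outset but then try to bypass it with an elementary Birkhoff-plus-contractivity argument. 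That is an interesting attempt, but it has a genuine gap exactly at the step you flag as ``the hard part'', and your proposed fix does not close it. First, positive measure of the event ``all of $\mathcal{A}$ appears in the next $N$ steps'' gives, via Birkhoff, a \emph{positive density} of such windows, not ``genuinely bounded-gap good blocks'': between good windows there can be arbitrarily long stretches missing a letter, so you cannot conclude ``exactly as in Theorem~\ref{t:balanced}'', whose hypothesis is that \emph{every} window of length $h$ covers $\mathcal{A}$. Second, and more seriously, a single all-letters window does not yield a contraction factor bounded away from $1$. The factor $\|\tr{\!\tilde{M}_m}\,\mathbf{e}_{i_m}\|_1=d-\|\mathbf{v}^{(m+1)}\|_\infty^{-1}$ from Lemma~\ref{l:AR2} is uniformly $<1$ only when $\|\mathbf{v}^{(m+1)}\|_\infty$ is bounded away from $\frac{1}{d-1}$, which by Lemma~\ref{l:AR1} requires a bounded-length window covering $\mathcal{A}$ \emph{ending at}~$m$; this is precisely the look-back hypothesis ``$\{i_{n-h+1},\ldots,i_n\}=\mathcal{A}$ for all $n\in[k,\ell)$'' in Lemma~\ref{l:AR3}, which you never verify for your blocks (already your first step, applying Lemma~\ref{l:AR3} to $[n_k,n_{k+1})$ with $h_k=n_{k+1}-n_k$, ignores it). After a long run of a single $\alpha_i$, one has $\|\mathbf{v}^{(n)}\|_\infty\to\frac{1}{d-1}$, so the matrices at the beginning of the next good window contract their columns by factors arbitrarily close to $1$; hence ``positive density of good windows'' alone does not give a uniform per-block factor $\theta<1$, and the telescoping $\|\tilde{M}_{[0,n_K)}\|_\infty\le\prod_k\theta(h_k)$ with history-independent $\theta$'s is unjustified.

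A repair along your lines is plausible but needs an extra idea you do not supply: for instance, use Birkhoff plus a gap-counting argument to get a positive density of \emph{pairs} of good windows within a bounded distance of each other, so that the second window of each pair satisfies the look-back condition with a fixed $h$ and contributes a uniform factor $\frac{2^{h}-d}{2^{h}-1}<1$; positive density of such uniformly contracting disjoint blocks, together with $\|\tilde{M}_n\|_\infty\le1$, then gives the exponential decay and summability needed for Proposition~\ref{p:bounded3}. The Brun half is glossed even more: the frequency estimate $(f_{j_n}^{(n)}-f_{k_n}^{(n)})/f_{i_n}^{(n)}\ge\frac1{F_h}$ that feeds Lemma~\ref{l:Br2} is proved in the paper only under the \emph{global} hypothesis that every length-$h$ window of the $i_n$'s covers $\{1,2,3\}$, and the theorem's hypothesis there concerns the second indices $j_n$, so ``identical with $\tilde M$ replaced by $\tr{\tilde M}$'' hides the same difficulty plus a nontrivial adaptation of that case analysis. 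In short, your scheme (Birkhoff recurrence to a positive-measure cylinder, uniform block contraction, Proposition~\ref{p:bounded3}, Lemma~\ref{l:bounded}) could be made to work, but as written the central uniformity claim is false as stated, whereas the paper's proof gets the exponential decay directly from the Lyapunov spectrum result of \cite{Avila-Delecroix}.
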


\begin{proof}
  From \cite{Avila-Delecroix}, we know that the second Lyapunov exponent of the cocycle $M_{[0,n)}$ is negative and hence $\| \tr{(M_{[0,n)})}\, \mathbf{v}\|$ decays exponentially fast for $\mu$-almost every~$\mathbf{f}$ and all $\mathbf{v} \in \mathbf{f}^\bot$. 
By Proposition~\ref{p:bounded3}, this implies that $\omega_\mathrm{AR}(\mathbf{f})$ and $\omega_\mathrm{Br}(\mathbf{f})$ respectively are finitely balanced. \qed
\end{proof}

The Brun algorithm admits an invariant ergodic probability measure absolutely continuous with respect to Lebesgue~\cite{Arnoux-Nogueira93}.
Therefore, we have the following corollary of Theorem~\ref{t:almostall}.

\begin{corollary}
  For Lebesgue almost all frequency vectors $\mathbf{f} \in \mathbb{R}^3$, the Brun word $\omega_\mathrm{Br}(\mathbf{f})$ is finitely balanced.
\end{corollary}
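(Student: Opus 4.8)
The plan is to obtain the corollary as an immediate application of the second part of Theorem~\ref{t:almostall} to the measure constructed in~\cite{Arnoux-Nogueira93}. First I would recall that this measure $\mu$ is an ergodic invariant probability measure for the Brun algorithm on $3$~letters which is absolutely continuous with respect to Lebesgue measure; moreover its density is positive (locally bounded away from $0$ and $\infty$) on the natural domain of the algorithm, so that $\mu$ and Lebesgue measure are in fact mutually absolutely continuous. Consequently a set of frequency vectors has full $\mu$-measure if and only if it has full Lebesgue measure, and it therefore suffices to exhibit one word~$w$ satisfying the hypothesis of Theorem~\ref{t:almostall} for this $\mu$.

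Next I would check that hypothesis, namely that there is an admissible Brun word $w = (i_0,j_0)\cdots(i_{n-1},j_{n-1})$ — satisfying the admissibility condition~\eqref{eq:Brun:admis} — with $\{j_0,j_1,\ldots,j_{n-1}\} = \{1,2,3\}$ and $\mu([w]) > 0$. Producing such a $w$ is elementary: the periodic pattern underlying $\beta_{12}\beta_{23}\beta_{31}$ gives $w = (1,2)(2,3)(3,1)$, whose second coordinates form $\{2,3,1\} = \{1,2,3\}$ and which is readily checked to be admissible. Since the Brun algorithm is defined for all directions in $\mathbb{R}_+^3$, the cylinder $[w]$ is a full-dimensional convex region of frequency vectors, hence of positive Lebesgue measure, and hence of positive $\mu$-measure because $\mu$ is equivalent to Lebesgue. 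Theorem~\ref{t:almostall} then gives that $\omega_\mathrm{Br}(\mathbf{f})$ is finitely balanced for $\mu$-almost every~$\mathbf{f}$, and by the equivalence of measures this holds for Lebesgue-almost every~$\mathbf{f}$, which is the claim.

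The only points requiring any care — and thus the \emph{main obstacle}, such as it is — are the two measure-theoretic facts borrowed from~\cite{Arnoux-Nogueira93}: that the Arnoux--Nogueira density is positive almost everywhere (so that $\mu$-null sets and Lebesgue-null sets coincide, not merely that Lebesgue-null implies $\mu$-null), and that the chosen cylinder $[w]$ carries positive $\mu$-mass, which again reduces to positivity of that density on $[w]$. Both are immediate from the explicit invariant density given in~\cite{Arnoux-Nogueira93}, and everything else is a one-line invocation of Theorem~\ref{t:almostall}.
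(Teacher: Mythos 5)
Your proposal is correct and follows essentially the same route as the paper, which simply invokes Theorem~\ref{t:almostall} together with the Arnoux--Nogueira absolutely continuous ergodic invariant measure for the Brun algorithm. The only difference is that you spell out the details the paper leaves implicit --- positivity of the invariant density (so that $\mu$-null and Lebesgue-null sets coincide) and positivity of an admissible cylinder such as $[(1,2)(2,3)(3,1)]$ --- both of which are indeed the right points to check and follow from the explicit density in~\cite{Arnoux-Nogueira93}.
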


\section{Imbalances in Brun sequences}

Similarly to the construction of unbalanced Arnoux-Rauzy words (over $3$~letters) in~\cite{Cassaigne-Ferenczi-Zamboni00}, we construct now unbalanced Brun words for $d=3$. 
First, for any sequence $(\sigma_n)_{n\in\mathbb N}$ satisfying \eqref{eq:Brun:admis}, define a sequence $(\tilde\sigma_n)_{n\in\mathbb N}$ as follows:
\[
	\tilde\sigma_0=\zeta_1
\quad\text{and}\quad
	\tilde\sigma_n=\begin{cases}
		\zeta_1:1\mapsto1, 2\mapsto2, 3\mapsto23, &\text{if } \sigma_{n-1}\sigma_n=\beta_{ij}\beta_{ij}
	,\\
		\zeta_2:1\mapsto1, 2\mapsto3, 3\mapsto32, &\text{if } \sigma_{n-1}\sigma_n=\beta_{ij}\beta_{ji}
	,\\
		\zeta_3:1\mapsto2, 2\mapsto3, 3\mapsto31, &\text{if } \sigma_{n-1}\sigma_n=\beta_{ij}\beta_{jk}
	,\end{cases}
\]
see Figure~\ref{fig:Brun}.
If $\omega$ and $\tilde\omega$ have the directive sequences $(\sigma_n)_{n\in\mathbb N}$ and $(\tilde\sigma_n)_{n\in\mathbb N}$ respectively, then $\omega$ and~$\tilde\omega$ differ only by a bijective letter-to-letter morphism, which does not influence the balance properties.
The proofs of the following results will be given by the end of this section.

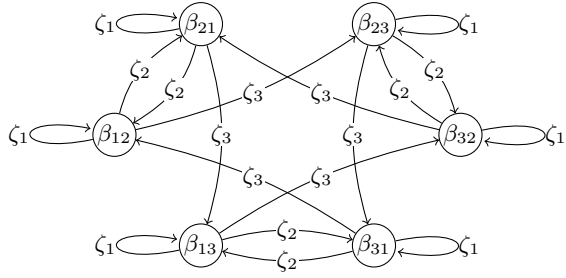
\begin{figure}[ht]
\centering
\begin{tikzpicture}[xscale=2.3, yscale=1.7]
\foreach \pos/\lbl in {0/32, 1/23, 2/21, 3/12, 4/13, 5/31}
{
	\node[draw,circle,inner sep=0.1ex] (\lbl) at (60*\pos:1) {$\beta_{\lbl}$};
}
\foreach \lba/\lbb in {21/12, 12/21, 32/23, 23/32, 13/31, 31/13}
{
	\path[->] (\lba) edge [bend left=20] node[fill=white,inner sep=0.05ex] {$\zeta_2$} (\lbb);
}
\foreach \lba/\lbb in {21/13, 13/32, 32/21}
{
	\path[->] (\lba) edge [bend left=12] node[fill=white,inner sep=0.05ex] {$\zeta_3$} (\lbb);
}
\foreach \lba/\lbb in {23/31, 31/12, 12/23}
{
	\path[->] (\lba) edge [bend right=12] node[fill=white,inner sep=0.05ex] {$\zeta_3$} (\lbb);
}
\foreach \lbl in {12,13,21}
{
	\path[->] (\lbl) edge [loop left] node[inner sep=0pt] {$\zeta_1$} (\lbl);
}
\foreach \lbl in {31,32,23}
{
	\path[->] (\lbl) edge [loop right] node[inner sep=0pt] {$\zeta_1$} (\lbl);
}
\end{tikzpicture}
\caption{Relation between the directive sequences of $\omega$ and~$\tilde\omega$.
If we follow the directive seqence of $\omega$ on the nodes, then we read the directive sequence of $\tilde\omega$ on the edges.}
\label{fig:Brun}
\end{figure}

\begin{proposition}\label{prop:notb-Brun}
Let $C\in\mathbb N$ and let
\[
\omega = \underbrace{\zeta_1^{C-1}\zeta_2\zeta_1\zeta_3^2\zeta_1}_{=\tau_{C-1}}
	\underbrace{\zeta_1^{C-2}\zeta_2\zeta_1\zeta_3^2\zeta_1}_{=\tau_{C-2}}
    \dotsm
	\underbrace{\zeta_1\zeta_2\zeta_1\zeta_3^2\zeta_1}_{=\tau_1}
	\underbrace{\zeta_2\zeta_1\zeta_3^2\zeta_1}_{=\tau_0}(\omega')
\]
for some $\omega' \in \{1,2,3\}^\mathbb{N}$ containing the letters $1$ and~$3$.
Then $\omega$ is not $C$-balanced.
\end{proposition}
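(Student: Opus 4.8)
The plan is to exhibit two factors of $\omega$ of equal length whose counts of some letter differ by at least $C+1$; since a word is $C$-balanced exactly when all pairs of equal-length factors have all letter counts within $C$, this is the negation of $C$-balancedness.

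As preparation I would compute the images $\tau_k(1) = (2^k3)^2\,2$, $\tau_k(2) = (2^k3)^2\,2\,1$, $\tau_k(3) = (2^k3)^2\,2\,1\,(2^k3)^2\,2\,1\,(2^k3)^2$ (equivalently the incidence matrix $M_{\tau_k}$), and extract the bookkeeping identities $|\tau_k(w)|_1 = |w|_2 + 2|w|_3$ and $|\tau_k(w)|_3 = 2|w| + 4|w|_3$, both independent of~$k$, together with $|\tau_k(w)|_2 = (2k+1)|w| + (4k+1)|w|_3$ and $|\tau_k(w)| = (2k+3)|w| + |w|_2 + (4k+7)|w|_3$. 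The mechanism these identities reveal is that applying $\tau_k$ converts a difference of $\delta$ in the number of letters~$3$ between two words into a difference of $(4k+1)\delta$ in the number of letters~$2$, while the number of letters~$3$ is only multiplied by~$4$ and the lengths change in a controlled way; the long runs $2^k$ produced by the $\zeta_1^k$-prefix of $\tau_k$ supply letter-$2$-only stretches that will be used to re-match lengths.

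The core of the argument is an induction on the levels of the presentation $\omega = \tau_{C-1}(\tau_{C-2}(\cdots\tau_0(\omega')\cdots))$, carried out from the innermost substitution~$\tau_0$ outward. Since $\omega'$ contains the letters $1$ and~$3$, the innermost image $\tau_0(\omega')$ already contains $\tau_0(1) = 332$ and $\tau_0(3) = 3321332133$, from which one extracts a base pair of equal-length factors with a small nonzero difference in the number of letters~$3$. The inductive step takes the pair available at level~$j-1$, applies~$\tau_j$, and then restores equal lengths by extending (or trimming) one of the two images by a short factor chosen, using the local block structure of $\tau_j(1), \tau_j(2), \tau_j(3)$, to be poor in the letters $1$ and~$3$ (in the extreme, a sub-word of a run~$2^j$), so that the large letter-$2$ discrepancy created by~$\tau_j$ survives. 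Tracking the two discrepancies through the $C$ levels and using the $k$-independence of the $1$- and $3$-counts, one obtains at the outermost level a pair of factors of $\omega$ whose letter-$2$ counts differ by much more than~$C$; a little care in the base case (or in the growth rate) ensures the gap is at least $C+1$, giving the conclusion.

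The main obstacle I expect is precisely this length re-matching: because the $\tau_k$ are non-uniform, the two images produced at each level have different lengths, and one must argue that the forced-length correction word can be found inside the language of~$\omega$ and can simultaneously be made poor in letters $1$ and~$3$ — this is what forces the specific tail $\zeta_2\zeta_1\zeta_3^2\zeta_1$ and the decreasing exponents $\zeta_1^{C-1}, \zeta_1^{C-2}, \ldots, \zeta_1^0$, and it comes with a case distinction between $k = 0$ and $k \ge 1$, which changes the first and last letters of the images and hence which seams occur. A secondary, more routine point is to check that all words used genuinely occur as factors of~$\omega$, which follows from the explicit block decompositions of the $\tau_k(i)$ recorded in the first step.
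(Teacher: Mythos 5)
Your plan has a genuine structural gap, and it is located exactly at the step you flag as ``the main obstacle''. You propose to maintain, at every level, a pair of \emph{equal-length} factors, to let the letter-$2$ discrepancy grow multiplicatively (by roughly $4k+1$ per application of $\tau_k$), and to keep the letter-$1$ and letter-$3$ discrepancies controlled by repairing the length mismatch with a short correction word poor in $1$'s and $3$'s. This is internally inconsistent: for equal-length factors $u,v$ over $\{1,2,3\}$ one has $(|u|_1-|v|_1)+(|u|_2-|v|_2)+(|u|_3-|v|_3)=0$, so the letter-$2$ difference is at most the sum of the absolute letter-$1$ and letter-$3$ differences; if those stay small, a letter-$2$ discrepancy ``much more than $C$'' is impossible. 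Concretely, after applying $\tau_k$ to an equal-length pair whose letter-$3$ counts differ by $\delta$, the length mismatch is of order $k\,|\delta|$, so the correction is \emph{not} short and cannot sit inside a single run $2^k$; any factor of that length consists mostly of $2$'s (with a $3$ about every $k+1$ letters), and appending it to the $2$-poorer word cancels almost all of the multiplicative letter-$2$ gain. What survives length re-matching is only an additive gain, and it must be booked in letter $1$, not letter $2$. (Separately, your image of $3$ is miscomputed: $\tau_k(3)=(2^k3)^2\,2\,1\,(2^k3)^2\,2\,1\,2^k3$, ending in a single block $2^k3$, so the coefficients $4k+1$ and $2|w|+4|w|_3$ are wrong; but correcting them does not rescue the scheme.)

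For comparison, the paper's proof never equalizes lengths by inserting explicit long correction factors at the outer level. It tracks the difference vector $\Ab(u)-\Ab(v)$ through the individual substitutions inside each block $\tau_q=\zeta_1^q\zeta_2\zeta_1\zeta_3^2\zeta_1$, and the only adjustment it uses is Lemma~\ref{lem:sigma-Delta}: when all images under $\sigma$ begin with the same letter $a$, the new difference may be taken to be $M_\sigma\mathbf\Delta+p\,\mathbf e_a$ with $p\in\{0,\pm1,\pm2\}$. Choosing $p=2$ at the $\zeta_3^2$ step (all images begin with $3$) and pushing this two-letter adjustment through the remaining substitutions $\zeta_2\zeta_1$ and $\zeta_1^q$ automatically generates the long, $1$-free stretch of $2$'s and $3$'s that equalizes lengths without touching the letter-$1$ coordinate. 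The net effect of one block is the additive step $\pm\,\tr{(q+1,-q,-1)}\mapsto\mp\,\tr{(q+2,-q-1,-1)}$, so starting from the pair $1,3$ in $\omega'$ and applying $\tau_0,\dots,\tau_{C-1}$ one reaches $\pm\,\tr{(C+1,-C,-1)}$, a zero-sum vector, i.e.\ equal-length factors whose letter-$1$ counts differ by $C+1$. The imbalance to aim at is thus letter $1$ --- the letter whose counts in $\tau_k(1),\tau_k(2),\tau_k(3)$ are $0,1,2$ independently of $k$ and which the correction material avoids entirely --- and the attainable growth is one unit per block, which is precisely why $C$ blocks with the decreasing exponents $C-1,C-2,\dots,0$ are used to reach imbalance $C+1$. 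To repair your proof you would have to redirect the induction to the letter-$1$ coordinate and replace the multiplicative amplification claim by such an additive gain, at which point you essentially recover the paper's argument.
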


Notice that the segment $\zeta_1^k\zeta_2\zeta_1\zeta_3^2\zeta_1 \, \zeta_1^{k-1}\zeta_2\zeta_1\zeta_3^2\zeta_1$
 in $(\tilde\sigma_n)_{n\in\mathbb N}$ comes from
 the segment $\beta_{ij}^k\beta_{ji}^2\beta_{ik}\beta_{kj}^2 \, \beta_{kj}^{k-1}\beta_{jk}^2\beta_{ki}\beta_{ij}^2$
 in $(\sigma_n)_{n\in\mathbb N}$.
Therefore, there exist directive sequences where each substitution $\beta_{ij}$ occurs with gaps that are bounded by~$2C+5$.

The proposition shows that for any $C$ there are uncountably many Brun words that are not $C$-balanced.
Moreover, there are also uncountably many Brun words that are not finitely balanced.

\begin{theorem}\label{thm:notb-Brun}
Let $(c_k)_{k\in\mathbb N}$ be a sequence of natural numbers such that
\[
	c_k>12\sqrt3\, 3^{N(c_0)+N(c_1)+\cdots+N(c_{k-1})} k
\quad\text{for all}\quad k\in\mathbb{N},
\]
with $N(c) = c(c+1)/2+3c$. Let $\rho_c=\tau_{c-1}\tau_{c-2}\dotsm \tau_1\tau_0$, with $\tau_j$ as in Proposition~\ref{prop:notb-Brun}.
Then the Brun word with directive sequence $\rho_{c_0}\rho_{c_1}\dotsm$ is not finitely balanced.
\end{theorem}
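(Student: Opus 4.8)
The plan is to build on Proposition~\ref{prop:notb-Brun} in a self-refining way: the directive sequence $\rho_{c_0}\rho_{c_1}\cdots$ is an infinite concatenation of blocks $\rho_{c_k}=\tau_{c_k-1}\cdots\tau_1\tau_0$, and each such block is (up to the tail $\omega'$) exactly the finite prefix of the directive sequence appearing in Proposition~\ref{prop:notb-Brun} with parameter $C=c_k$. So whatever infinite word $\omega$ has directive sequence $\rho_{c_0}\rho_{c_1}\cdots$, the ``level-$k$ word'' $\omega^{(M_k)}$ (where $M_k=|\rho_{c_0}|+\cdots+|\rho_{c_{k-1}}|$ is the position at which the block $\rho_{c_k}$ starts) has directive sequence $\rho_{c_k}\rho_{c_{k+1}}\cdots$, hence it is of the form $\tau_{c_k-1}\cdots\tau_0(\omega')$ with $\omega'$ having directive sequence $\rho_{c_{k+1}}\rho_{c_{k+2}}\cdots$. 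I would first check that this tail $\omega'$ indeed contains both letters $1$ and $3$ — this follows from the admissibility condition~\eqref{eq:Brun:admis} and the fact that each $\beta_{ij}$ (hence each $\zeta_\ell$) occurs infinitely often, so that all three letters appear in $\omega'$. Applying Proposition~\ref{prop:notb-Brun} then tells us $\omega^{(M_k)}$ is not $c_k$-balanced: there are two factors $u^{(k)},v^{(k)}$ of $\omega^{(M_k)}$ of equal length with $\big||u^{(k)}|_j-|v^{(k)}|_j\big|\ge c_k$ for some letter~$j$.

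The core of the argument is then to transport this imbalance, occurring deep inside the word at level~$k$, back to level~$0$, and to control how much the balance can shrink under the substitutions $\sigma_{[0,M_k)}=\sigma_0\sigma_1\cdots\sigma_{M_k-1}$. The images $\sigma_{[0,M_k)}(u^{(k)})$ and $\sigma_{[0,M_k)}(v^{(k)})$ are factors of~$\omega$ but they no longer have equal length in general; the standard fix is to truncate the longer one to match the shorter, which changes each letter-count by at most the length of the longest single-letter image $\|\sigma_{[0,M_k)}(a)\|_1$, a quantity one bounds by $\|M_{[0,M_k)}\|_\infty$ times a constant. To get the key lower bound one writes, for the projection $\pi$ onto $\mathbf1^\bot$ along the frequency vector,
\[
\big\|\pi\,M_{[0,M_k)}\,\big(\Ab(u^{(k)})-\Ab(v^{(k)})\big)\big\|_\infty
\]
and expands $\pi\,M_{[0,M_k)}$ via the contracted matrices: $M_{[0,M_k)}$ acts on $\mathbf f^{\perp}$-type vectors (note $\Ab(u^{(k)})-\Ab(v^{(k)})$ lies in a hyperplane transverse to the level-$k$ frequency line) essentially as $\tilde M_{[0,M_k)}$, whose $\|\cdot\|_\infty$ is controlled, combined with the expansion coming from the $\zeta_3$-steps. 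The explicit bounds in Proposition~\ref{prop:notb-Brun} (with the quantity $N(c)=c(c+1)/2+3c$ counting the number of $\zeta$-letters in a block $\rho_c$, so that $|\rho_{c_0}\cdots\rho_{c_{k-1}}|=N(c_0)+\cdots+N(c_{k-1})$, and the matrices of $\zeta_1,\zeta_2,\zeta_3$ having spectral/entry data controlled by~$3$) are exactly what make $12\sqrt3\,3^{N(c_0)+\cdots+N(c_{k-1})}$ the right normalizing factor: after dividing the surviving imbalance of size $\ge c_k$ by this factor one still has something $\ge k$, up to the $\sqrt3$ accounting for the discrepancy-to-balance passage of Lemma~\ref{l:bounded} and the truncation loss. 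Hence $\omega$ has a pair of equal-length factors that is not $k$-balanced, for every $k$, so $B(\omega)=\infty$.

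Concretely I would organize it as follows. \emph{Step 1:} Record that each block $\rho_{c_k}$ consists of $N(c_k)$ letters and that $\sigma_{[0,M_k)}$ has incidence matrix $M_{[0,M_k)}$ with all entries bounded by $3^{M_k}$ where $M_k=\sum_{i<k}N(c_i)$ (each $\zeta_\ell$ matrix has entries in $\{0,1\}$ and the product grows at most like $3^{\#\text{letters}}$ — more precisely one bounds $\|M_{[0,M_k)}\|_\infty$, using that the ``good direction'' contracts, leaving only the bounded expansion in the $\zeta_3$ direction, and $3$ is a crude but sufficient per-letter bound). \emph{Step 2:} Invoke Proposition~\ref{prop:notb-Brun} at level $k$ to produce $u^{(k)},v^{(k)}$, factors of $\omega^{(M_k)}$, with $|u^{(k)}|=|v^{(k)}|$ and $\||u^{(k)}|_j-|v^{(k)}|_j\|_\infty\ge c_k$; check the hypothesis that the relevant tail contains $1$ and~$3$. \emph{Step 3:} Form $\sigma_{[0,M_k)}(u^{(k)})$, $\sigma_{[0,M_k)}(v^{(k)})$, truncate to equal length losing at most $\|M_{[0,M_k)}\|_\infty$ in each letter count, and estimate the remaining imbalance from below by $c_k/\big(\text{const}\cdot 3^{M_k}\big)\ge k$ using the growth hypothesis on $c_k$; the constant $12\sqrt3$ absorbs the factor $4$ of Lemma~\ref{l:bounded}, the truncation loss, and the crude matrix bound. \emph{Step 4:} Conclude that $B(\omega)\ge k$ for all $k$, so $\omega$ is not finitely balanced. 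The main obstacle is \emph{Step 3}: getting an honest lower bound on the imbalance of the images requires knowing that $\Ab(u^{(k)})-\Ab(v^{(k)})$ is not nearly parallel to the level-$k$ frequency direction (so that it is not contracted away by $M_{[0,M_k)}$), and then tracking precisely how the $\zeta_1$-contracted versus $\zeta_3$-expanded parts of the product interact — this is where the specific structure of the blocks $\tau_j$, and the reason Proposition~\ref{prop:notb-Brun} chose that particular word, is essential, and where the constant $12\sqrt3$ and the exponent $3^{N(\cdot)}$ have to be matched carefully rather than estimated loosely.
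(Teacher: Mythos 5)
Your overall strategy (apply Proposition~\ref{prop:notb-Brun} at the start of each block $\rho_{c_k}$, then transport the imbalance from level $M_k$ back to level $0$ with a loss factor of order $12\sqrt3\,3^{M_k}$) is indeed the paper's strategy. But the transport step, which is the actual content of the paper's proof (its Lemma~\ref{lem:12k3}), is not established by your sketch, and the way you propose to do it would fail. First, the truncation idea in your Step~3 does not work: if $\mathbf\Delta=\Ab(u^{(k)})-\Ab(v^{(k)})$ has $\|\mathbf\Delta\|_\infty\ge c_k$, then although $|u^{(k)}|=|v^{(k)}|$, the lengths of $\sigma_{[0,M_k)}(u^{(k)})$ and $\sigma_{[0,M_k)}(v^{(k)})$ differ by $\langle\mathbf 1,M_{[0,M_k)}\mathbf\Delta\rangle$, which is of order $\|M_{[0,M_k)}\|_\infty\,\|\mathbf\Delta\|_\infty$ --- the same order as the imbalance you are trying to preserve --- so truncating to equal length can destroy the imbalance entirely; the loss is not ``at most $\|M_{[0,M_k)}\|_\infty$ in each letter count''. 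Second, the tool you invoke for the lower bound, the contracted matrices $\tilde M_n$ of \eqref{e:tMtilde}, only gives \emph{upper} bounds on $\|\pi\,M_{[0,M_k)}\mathbf y\|_\infty$ (that is how the paper proves balancedness); what is needed here is a \emph{lower} bound on how little the transverse component is contracted over the first $M_k$ steps, and you yourself flag this as the unresolved ``main obstacle''. A generic fix via the smallest singular value of $M_{[0,M_k)}$ gives only a loss of order $\|M_{[0,M_k)}\|^2\sim 9^{M_k}$, which is not covered by the hypothesis $c_k>12\sqrt3\,3^{N(c_0)+\cdots+N(c_{k-1})}k$, so the constant really has to be obtained by a finer argument. (Also, the condition you state --- that $\mathbf\Delta$ be ``not nearly parallel to the frequency direction'' --- is the wrong one: the frequency direction is the expanded one and is killed by $\pi$; what must be bounded below is the transverse component and its survival rate.)

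The paper's missing idea is precisely this: instead of transporting a \emph{pair} of factors, pass via Lemma~\ref{l:bounded} to the discrepancy, so that the large quantity is $\|\pi_{N}\,\Ab(u)\|_\infty$ for a \emph{prefix} $u$ of $\omega^{(N)}$; then $\mathbf x=\Ab(u)$ is a non-negative vector, $\sigma_{[0,N)}(u)$ is again a prefix (no truncation needed), and one shows that each Brun matrix $M_n$ divides the angle between two non-negative vectors by at most $3$ and does not decrease the Euclidean norm of a non-negative vector. This yields that the distance of $M_{[0,N)}\mathbf x$ from the line $\mathbb R\,\mathbf f$ is at least $3^{-N}/\sqrt3$ times $\|\pi_N\mathbf x\|_\infty$, and converting back through $\pi$ and Lemma~\ref{l:bounded} produces exactly the factor $12\sqrt3\,3^{N}$ in the hypothesis on $c_k$. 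Non-negativity of the prefix vector is essential for this angle argument and is unavailable for your difference vectors $\mathbf\Delta$. Without this (or an equivalent quantitative lower bound on the per-step contraction), your Step~3 is a genuine gap, and the theorem as stated, with base $3^{N(\cdot)}$ and constant $12\sqrt3$, does not follow.
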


To prove these statements, we will use techniques that are typical for finding imbalances in $S$-adic sequences.
Let $u,v\in\mathcal{A}^*$. Then we put $\mathbf\Delta_{u,v}=\Ab(u) - \Ab(v)$.
For any substitution~$\sigma$, we clearly have
\begin{equation}\label{eq:sigma-Delta}
	\mathbf\Delta_{\sigma(u),\sigma(v)}=M_\sigma\mathbf\Delta_{u,v}
,\end{equation}
and consequently:

\begin{lemma}\label{lem:sigma-Delta}
Let $\sigma$ be a substitution over the alphabet~$\mathcal{A}$ such that the images of all letters under $\sigma$ start with the same letter $a \in \mathcal{A}$.
Let $u,v$ be non-empty factors of a word $\omega \in \mathcal{A}^\mathbb{N}$. 
Then $\sigma(\omega)$ contains factors $u', v'$ with $\mathbf\Delta'=\mathbf\Delta_{u', v'}=M_\sigma\mathbf\Delta+p\,\mathbf e_a$
 for all $p\in\{0,\pm1,\pm2\}$.
\end{lemma}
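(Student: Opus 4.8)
The plan is to treat the case $p=0$ as the ``trivial'' move and to get the remaining four values $p\in\{\pm1,\pm2\}$ by perturbing $u'$ and $v'$ by a single letter at one end, exploiting that the first letter of every image $\sigma(j)$ is the fixed letter $a$. For $p=0$ this is immediate from \eqref{eq:sigma-Delta}: taking $u'=\sigma(u)$ and $v'=\sigma(v)$ — which are non-empty factors of $\sigma(\omega)$ whenever $u,v$ are non-empty factors of $\omega$ — gives $\mathbf\Delta'=M_\sigma\mathbf\Delta$.

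Next I would record two elementary moves available for any non-empty factor $w$ of $\omega$. Since $\omega$ is right-infinite, every occurrence $w=\omega_{[k,\ell)}$ is followed by the letter $\omega_\ell$, so $w\,\omega_\ell$ is a factor of $\omega$ and $\sigma(w)\sigma(\omega_\ell)$ is a factor of $\sigma(\omega)$; as $\sigma(\omega_\ell)$ begins with $a$, the word $\sigma(w)\,a$ is a factor of $\sigma(\omega)$, with $\Ab(\sigma(w)\,a)=M_\sigma\Ab(w)+\mathbf e_a$. Dually, deleting the first letter of $\sigma(w)$, which is $a$, produces a factor of $\sigma(w)$, hence of $\sigma(\omega)$; it is non-empty as soon as $|\sigma(w)|\ge2$, and then its abelianization equals $M_\sigma\Ab(w)-\mathbf e_a$. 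Together with the unchanged word $\sigma(w)$, this gives, for $w\in\{u,v\}$, a factor $w'$ of $\sigma(\omega)$ with $\Ab(w')=M_\sigma\Ab(w)+\varepsilon_w\mathbf e_a$ for any prescribed $\varepsilon_w\in\{-1,0,1\}$.

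Finally I would combine the two sides: with $u',v'$ obtained as above, $\mathbf\Delta'=\Ab(u')-\Ab(v')=M_\sigma\mathbf\Delta+(\varepsilon_u-\varepsilon_v)\mathbf e_a$, and the choices $(\varepsilon_u,\varepsilon_v)=(0,0),(1,0),(1,-1),(0,1),(-1,1)$ realize $p=0,1,2,-1,-2$ respectively, which is exactly the required range. The only point that needs a small amount of care is that a deletion move is used precisely when $p=\pm2$ — removing the leading $a$ of $\sigma(v)$ when $p=2$ and of $\sigma(u)$ when $p=-2$ — so one needs $|\sigma(v)|\ge2$, resp.\ $|\sigma(u)|\ge2$, for $u',v'$ to stay non-empty; this holds in all applications of the lemma below, the only obstruction being the degenerate case of a letter whose $\sigma$-image is the single letter $a$. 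Thus the \emph{main obstacle is essentially bookkeeping} rather than a genuine difficulty: the whole content of the lemma is the one-letter window shift made possible by the common leading letter $a$.
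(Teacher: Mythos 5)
Your argument is correct and is precisely the (omitted, ``consequently'') argument the paper has in mind: extend $\sigma(u)$, $\sigma(v)$ on the right by the letter $a$ beginning the image of the next letter of $\omega$, or strip their leading $a$, and combine the offsets $\varepsilon_u-\varepsilon_v$ to realize every $p\in\{0,\pm1,\pm2\}$. Your caveat about needing $|\sigma(v)|\ge 2$ (resp.\ $|\sigma(u)|\ge 2$) when $p=\pm2$ is harmless: the lemma does not require $u',v'$ to be non-empty, and in the paper's only application with $p\ne 0$ (the step $\zeta_3^2$ with $p=2$) the word whose leading letter is deleted contains a letter whose image has length at least~$2$.
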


\begin{proof}[of Proposition~\ref{prop:notb-Brun}]
Consider a pair of words $u, v$, with $\mathbf\Delta_{u,v}=\tr{(q+1,-q,-1)}$.
Then, using \eqref{eq:sigma-Delta} and applying Lemma~\ref{lem:sigma-Delta} with the substitution $\zeta_3^2: 1\mapsto3,\, 2\mapsto31,\, 3\mapsto312$, we obtain the following chain of $\mathbf\Delta$'s:
\[
	\vect{q+1}{-q}{-1} \xrightarrow{\zeta_1} \vect{q+1}{-q-1}{-1} \xrightarrow[p=2]{\zeta_3^2} \vect{-q-2}{-1}{1}
	\xrightarrow{\zeta_2\zeta_1} \vect{q-2}{1}{1} \xrightarrow{\zeta_1^q} \vect{-(q+2)}{q+1}{1},
\]
and by symmetry $\tr{(-q-1,q,1)}\xrightarrow{\tau_q}\tr{(q-2,-q-1,-1)}$.

ting from the pair of factors $1,3$ of~$\omega'$, we have the chain
\[
\vect{1}{0}{-1} \xrightarrow{\tau_0} \vect{-2}{1}{1}
    \rightarrow\cdots\rightarrow \pm\vect{-C}{C-1}{1} \xrightarrow{\tau_{C-1}} \pm\vect{C+1}{-C}{-1}
.\]
The last vector sums to zero, therefore it corresponds to factors $u,v$ of~$\omega$ such that $|u|=|v|$ and $\bigl||u|_1-|v|_1\bigr|=C+1$. \qed
\end{proof}

\begin{lemma}\label{lem:12k3}
Let $\omega$ be a Brun word and let $C,N\in\mathbb N$ be such that $\omega^{(N)}$ is not $\bigl(12\sqrt3\,3^N C\bigr)$-balanced.
Then $\omega$ is not $C$-balanced. 
\end{lemma}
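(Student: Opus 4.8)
The plan is to bound how much the balance can decrease when passing from a word to one of its Dumont--Thomas ancestors under a single Brun substitution, and then iterate $N$ times. Concretely, suppose $\omega = \sigma_0(\omega^{(1)})$ with $\sigma_0 = \beta_{i_0 j_0}$ (or rather the conjugated substitution $\tilde\sigma_0 \in \{\zeta_1,\zeta_2,\zeta_3\}$, which has the same balance behaviour). If $\omega^{(1)}$ were $C'$-balanced, I would show that $\omega$ must be $(c\,C')$-balanced for an explicit constant $c$; contrapositively, if $\omega$ is not $C$-balanced then $\omega^{(1)}$ is not $(C/c)$-balanced. Iterating $N$ times gives: $\omega$ not $C$-balanced $\Rightarrow$ $\omega^{(N)}$ not $(C/c^N)$-balanced, which is the contrapositive of the statement once we check $c \le 12\sqrt3\,3$ works, i.e. $c^N \le (12\sqrt3)^N 3^N$; in fact it suffices to take $c = 12\sqrt3 \cdot 3$ at each step, or even to absorb the $12\sqrt3$ into the first step only and use $c = 3$ afterwards.

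The key quantitative input is the relation $\mathbf\Delta_{\sigma(u),\sigma(v)} = M_\sigma \mathbf\Delta_{u,v}$ from \eqref{eq:sigma-Delta}, together with Lemma~\ref{l:bounded} relating balance to discrepancy $\Delta(\omega) = \sup_n \|\pi\,\Ab(\omega_{[0,n)})\|_\infty$. First I would note that, by Lemma~\ref{l:bounded}, $\omega^{(N)}$ not $\bigl(12\sqrt3\,3^N C\bigr)$-balanced means $\Delta(\omega^{(N)}) > 3\sqrt3\,3^N C$ (using $B \le 4\Delta$), i.e. there is a prefix of $\omega^{(N)}$ whose projection has $\|\cdot\|_\infty$ large; equivalently, by the characterization of balance via pairs of factors, there are factors $u^{(N)}, v^{(N)}$ of $\omega^{(N)}$ of equal length with $\|\mathbf\Delta_{u^{(N)},v^{(N)}}\|_\infty > 12\sqrt3\,3^N C$. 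Applying the substitutions $\sigma_0,\dots,\sigma_{N-1}$ one at a time and using \eqref{eq:sigma-Delta} (each image $u^{(k)} = \sigma_k(u^{(k+1)})$ is a factor of $\omega^{(k)}$, since images of all letters under a Brun substitution share the same first letter — this is why the Dumont--Thomas boundary terms only cost a bounded additive correction, handled by Lemma~\ref{lem:sigma-Delta}), the difference vector gets multiplied by $M_{\sigma_k}$. Since each $M_{\beta_{ij}}$ has $\ell^\infty\to\ell^\infty$ operator norm equal to $2$, and one checks that the coordinate-sum-zero component controlling $\pi\,\Ab$ contracts by no more than a factor $\sqrt3$ in going from $\mathbf\Delta$ to its image's zero-sum part, after $N$ steps $\|\mathbf\Delta_{u,v}\|_\infty \ge \|\mathbf\Delta_{u^{(N)},v^{(N)}}\|_\infty / (\text{product of norms})$; the bookkeeping constant $12\sqrt3\,3^N$ is precisely chosen so that it dominates the accumulated loss (the factor $3$ per step from $\|M_{\beta_{ij}}\|$ combined with norm-equivalence constants on $\mathbf1^\perp$, and the $12\sqrt3$ absorbing the Lemma~\ref{l:bounded} conversions at both ends plus the additive $\pm2\,\mathbf e_a$ corrections).

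More carefully, I would track the zero-sum projection directly: write $\delta^{(k)} = \pi_k \mathbf\Delta_{u^{(k)},v^{(k)}}$ where $\pi_k$ projects along $\mathbf f^{(k)}$ onto $\mathbf1^\perp$. The point is that $\|\mathbf\Delta_{u^{(k)},v^{(k)}}\|_\infty$ and $\|\delta^{(k)}\|_\infty$ differ by a bounded factor (the factors $u^{(k)},v^{(k)}$ have equal length, so $\mathbf\Delta_{u^{(k)},v^{(k)}}$ already lies in $\mathbf1^\perp$ in the relevant sense once we subtract the common length times the frequency — actually $\mathbf\Delta$ for equal-length factors is automatically summing to $0$, so $\mathbf\Delta_{u^{(k)},v^{(k)}} \in \mathbf1^\perp$ and $\|\pi_k - \mathrm{id}\|$ on this subspace is bounded). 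Then $M_{\sigma_k}$ maps $\mathbf1^\perp$ to $\mathbf1^\perp$, and on the $2$-dimensional space $\mathbf1^\perp \subset \mathbb R^3$ each Brun incidence matrix acts with a bounded norm; in the worst case $\|M_{\beta_{ij}}^{-1}\|$ restricted to $\mathbf1^\perp$ is what matters, and going backwards (from $\omega^{(N)}$ up to $\omega$) each step multiplies by $M_{\sigma_k}$, so $\|\mathbf\Delta_{u^{(0)},v^{(0)}}\|_\infty$ is at least $\|\mathbf\Delta_{u^{(N)},v^{(N)}}\|_\infty$ divided by $\prod \|M_{\sigma_k}\text{ on }\mathbf1^\perp\|$, minus the accumulated additive error from Lemma~\ref{lem:sigma-Delta} (a geometric series bounded by a constant times $3^N$). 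I expect the main obstacle to be precisely this constant-chasing: verifying that $3^N$ (times the fixed $12\sqrt3$) genuinely dominates both the multiplicative growth of $\|M_{\beta_{ij}}\|$-products on $\mathbf1^\perp$ and the additive Dumont--Thomas corrections simultaneously, uniformly over all admissible Brun directive sequences of length $N$. Once that inequality is pinned down, the contrapositive gives the lemma immediately: $\omega$ $C$-balanced would force $\omega^{(N)}$ to be $(12\sqrt3\,3^N C)$-balanced, contradicting the hypothesis.
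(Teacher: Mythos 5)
There is a genuine gap, and it sits exactly where you yourself flag the ``main obstacle''. Your plan hinges on two claims that do not hold. First, you assert that $M_{\sigma_k}$ maps $\mathbf{1}^\bot$ to $\mathbf{1}^\bot$: it does not, since the column sums of $M_{\beta_{ij}}$ are not all $1$ (nor do all images of letters under $\beta_{ij}$ or $\zeta_i$ start with the same letter, so Lemma~\ref{lem:sigma-Delta} is not even applicable here). Consequently, the images $u^{(0)}=\sigma_{[0,N)}(u^{(N)})$, $v^{(0)}=\sigma_{[0,N)}(v^{(N)})$ of equal-length factors are in general \emph{not} of equal length, and a large $\|\mathbf{\Delta}_{u^{(0)},v^{(0)}}\|_\infty$ by itself says nothing about the balance of $\omega$: the bulk of that vector may point along the frequency direction $\mathbf{f}$ (i.e.\ merely record a length discrepancy). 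What must be bounded from below is the component transverse to $\mathbf{f}$, i.e.\ $\pi\,\mathbf{\Delta}$. Second, and more importantly, you misidentify the enemy: the danger is not ``multiplicative growth of $\|M_{\beta_{ij}}\|$-products'' but \emph{contraction} of exactly this transverse component under the forward action --- that contraction is the whole mechanism behind Theorems~\ref{t:balanced} and~\ref{t:balanced2} (the $\tilde M$-matrices have norm $\le 1$ and typically $<1$). So the heart of the lemma is a uniform \emph{lower} bound on how much of the transverse component survives one application of a Brun matrix, with the moving hyperplanes $(\mathbf{f}^{(n)})^\bot$ and projections $\pi_n$ taken into account. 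Your proposal offers only an unproven ``contracts by no more than a factor $\sqrt3$'' and then explicitly defers the decisive inequality (``I expect the main obstacle to be precisely this constant-chasing''), so the step that actually produces the $3^N$ is missing.

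For comparison, the paper's argument avoids factor pairs altogether: by Lemma~\ref{l:bounded} it picks a \emph{prefix} $u$ of $\omega^{(N)}$ with $\|\pi_N\,\Ab(u)\|_\infty>\tfrac14\,12\sqrt3\,3^NC$, notes that $\sigma_{[0,N)}(u)$ is a prefix of $\omega$ with abelianization $M_{[0,N)}\Ab(u)$, and then supplies the missing quantitative ingredient as a geometric fact: each Brun matrix divides the angle between two nonnegative vectors by at most $3$ and does not decrease the Euclidean norm of a nonnegative vector, so the orthogonal distance of the image point from the line $\mathbb{R}\mathbf{f}$ is at least $3^{-N}$ times the distance at level $N$; the factor $12\sqrt3$ absorbs the $\ell^2$/$\ell^\infty$ and oblique-versus-orthogonal conversions plus the factor $4$ from Lemma~\ref{l:bounded}. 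If you want to salvage your ``backwards'' idea, it can be done, but you must keep the level-dependent projections: for $\mathbf{y}\in\mathbf{1}^\bot$ one has $\mathbf{y}=\pi_N\,(M_{[0,N)})^{-1}\,\pi_0\,M_{[0,N)}\,\mathbf{y}$, and bounding $\|(M_{[0,N)})^{-1}\|_\infty$ and $\|\pi_N\|_\infty$ yields a lower bound on $\|\pi_0 M_{[0,N)}\mathbf{y}\|_\infty$; that is a legitimate alternative route, but it is not what your write-up establishes.
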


\begin{proof}
We will only sketch the proof.
According to Lemma~\ref{l:bounded}, there is a prefix~$u$ of~$\omega^{(N)}$ such that $\|\pi_N\, \mathbf{x}\|_\infty > \frac14 12 \sqrt3\, 3^N C$, where $\mathbf{x} = \Ab(u)$ and  $\pi_N$ is the projection along the frequency vector $\mathbf{f}^{(N)}$ of $\omega^{(N)}$ onto~$\mathbf{1}^\bot$.
Then the frequency vector of~$\omega$ is $\mathbf{f} = M_{[0,N)}\, \mathbf{f}^{(N)}$, and $M_{[0,N)}\, \mathbf{x}$ is the abelianization of a prefix of~$\omega$.

Let $\gamma$ be the angle between the vectors $\mathbf{f}^{(N)}$ and~$\mathbf{x}$.
Then it can be verified that applying $M_n$ divides the angle between two non-negative vectors by at most~$3$, thus the angle between $\mathbf{f}$ and~$\mathbf{x}$ is at least $\gamma/3^N$.
Since the matrices $M_n$ are of the form identity matrix + non-negative matrix, and the vector $\mathbf{x}$ is non-negative,  we get that $\|\mathbf{x}\|_2 \geq \|\mathbf{x}^{(N)}\|_2$.
Therefore the (orthogonal) distance~$\delta$ of the point~$\mathbf{x}$ from the line $\mathbb{R}\,\mathbf{f}$ is at least $1/3^N$ times the distance of $\mathbf{x}$ from $\mathbb{R}\, \mathbf{f}^{(N)}$, which is at least $\frac{1}{\sqrt3} \|\pi_N\, \mathbf{x}\|_2$.
Altogether, $\delta \geq \frac{1}{\sqrt3\cdot3^N} \|\pi_N\, \mathbf{x}\|_2 \geq \frac{1}{3\cdot3^N} \|\pi_N\, \mathbf{x}\|_\infty > 3C$.

Finally, $\|\pi\, M_{[0,N)}\, \mathbf{x}\|_\infty \geq \frac{1}{\sqrt3} \|\pi\, M_{[0,N)}\, \mathbf{x}\|_2 \geq \frac13 \delta > C$, which means according to Lemma~\ref{l:bounded} that $\omega$ is not $C$-balanced. \qed
\end{proof}

\begin{proof}[of Theorem~\ref{thm:notb-Brun}]
The Brun word with directive sequence $\rho_{c_k}\rho_{c_{k+1}}\rho_{c_{k+1}}\dotsm$ is not $N(c_k)$-balanced according
 to Proposition \ref{prop:notb-Brun}.
Lemma \ref{lem:12k3} therefore gives that $\omega$ is not $k$-balanced. \qed
\end{proof}

\subsubsection*{Acknowledgements}
The authors are very grateful to Val\'erie Berth\'e, Timo Jolivet, and S\'ebastien Labb\'e for many fruitful discussions. 

This work was supported by the Grant Agency of the Czech Technical University in Prague grant SGS11/162/OHK4/3T/14, Czech Science Foundation grant 13-03538S, and the ANR/FWF project ``FAN -- Fractals and Numeration'' (ANR-12-IS01-0002,
FWF grant I1136).  

\bibliographystyle{splncs03}
\bibliography{balances}
\end{document}